\newcommand{\remove}[1]{}
\newcommand{\cliques}{{\cal C}}
\newcommand{\sdiff}{{\tt IMCE}}
\newcommand{\ov}{{\tt OV}}
\newcommand{\stix}{{\tt STIX}}
\newcommand{\naive}{{\tt Naive}}
\newcommand{\mcmei}{{\tt MCMEI}}
\newcommand{\cand}{{\tt cand}}
\newcommand{\fini}{{\tt fini}}
\newcommand{\tomita}{{\tt TTT}\xspace}
\newcommand{\tomitaE}{{\tt TTTExcludeEdges}\xspace}
\newcommand{\pivot}{{\tt pivot}}
\newcommand{\ext}{{\tt ext}}
\newcommand{\csnew}{{\tt IMCENewClq}\xspace}
\newcommand{\csnewttt}{{\tt FastIMCENewClq}\xspace}
\newcommand{\cssub}{{\tt IMCESubClq}\xspace}
\newcommand{\dec}{{\tt Decremental}}
\newtheorem{observation}{Observation}
\begin{document}

\title{Incremental Maintenance of Maximal Cliques in a Dynamic Graph}

\author{
Apurba Das \and
Michael Svendsen \and
Srikanta Tirthapura
}

\institute{Apurba Das \at
              Iowa State University \\
              \email{adas@iastate.edu}      
           \and
          	Michael Svendsen \at
              Iowa State University \\
		\email{michael.sven5@gmail.com}
	    \and
		Srikanta Tirthapura \at
		Iowa State University \\
		\email{snt@iastate.edu}
}

\maketitle

\begin{abstract}
We consider the maintenance of the set of all maximal cliques in a dynamic graph that is changing through the addition or deletion of edges. We present nearly tight bounds on the magnitude of change in the set of maximal cliques, as well as the first change-sensitive algorithms for clique maintenance, whose runtime is proportional to the magnitude of the change in the set of maximal cliques. We present experimental results showing these algorithms are efficient in practice, and are faster than prior work by two to three orders of magnitude.
\end{abstract}

\sloppy
\section{Introduction}
\label{sec:intro}
Graphs are widely used in modeling linked data, and there has been tremendous interest in efficient methods for finding patterns in graphs, an area often called ``graph mining''. A fundamental task in graph mining is the identification of {\em dense subgraphs}, which are groups of vertices that are tightly interconnected. 

Many applications need to identify dense subgraphs from an evolving graph that is changing with time as new edges are added and old edges are deleted. Examples include real-time identification of stories from Twitter~\cite{AKS+13} through mining dense subgraphs from an evolving graph on entities, and the maintenance of common intervals among genomes~\cite{CRR11} through mining maximal cliques in an appropriately defined dynamic graph. More broadly, identifying dense structures in a graph is applicable to any task that needs to identify and analyze communities with a network, such as the analysis of communities among users in microblogging platforms~\cite{JS+07}, identification of groups of closely linked people in a social network~\cite{hanneman-socialnw,LSZL2011,LSH2008}, identification of web communities~\cite{GKT05,KRRT1999,RH2005}, and even in the construction of the Phylogenetic Tree of Life~\cite{DABMMS2004,SDREL2003,YBE2005}.

Most current methods for identifying dense subgraphs are designed for a static graph. Suppose we used a method designed for a static graph to handle a dynamic graph. If the input graph changes slightly, say, by the addition of a few edges, it is necessary to enumerate all dense subgraphs all over again, even though the set of dense subgraphs may have only changed slightly due to the addition of the new edges. This repeated and redundant work is a source of serious inefficiency, so that methods designed for static graphs are not applicable to a graph that is changing frequently. Different methods are needed, which can handle changes to a graph more efficiently. From a foundational perspective, identifying dense structures in a graph has been a problem of long-standing interest in computer science, but even basic questions remain unanswered on dynamic graphs.

We consider the maintenance of the set of {\em maximal cliques} in a dynamic graph. The maximal clique is perhaps the most fundamental and widely studied dense subgraph. Let $G = (V,E)$ be an undirected unweighted graph on vertex set $V$ and edge set $E$. A clique in $G$ is a set of vertices $C \subseteq V$ such that any two vertices in $C$ are connected to each other in $G$. A clique is called maximal if it is not a proper subset of any other clique.  Let $\cliques(G)$ denote the set of maximal cliques in $G$. Many applications benefit from efficient maintenance of maximal cliques in a dynamic graph, such as described in the work of Chateau et al.~\cite{CRR11} on maintaining common intervals among genomes, Duan et al.~\cite{DL+12} on incremental $k$-clique clustering, Hussain et al.~\cite{HM+15} on maintaining the maximum range-sum query over a point stream.

Suppose that we started from a graph $G = (V, E)$ and the state of the graph changed to $G' = (V, E \cup H)$ through an addition of a set of new edges $H$ to the set of edges in the graph $G$. See Figure~\ref{fig:init} for an example. Let $\Lambda^{new}(G,G') = \cliques(G') \setminus \cliques(G)$ denote the set of maximal cliques that were newly formed in going from $G$ to $G'$, and $\Lambda^{del}(G,G') = \cliques(G) \setminus \cliques(G')$ denote the set of cliques that were maximal in $G$ but are no longer maximal in $G'$. Let $\Lambda(G,G') = \Lambda^{new}(G,G') \cup \Lambda^{del}(G,G')$ denote the symmetric difference of $\cliques(G)$ and $\cliques(G')$. We ask the following questions:

\begin{itemize}
\item How large can the size of $\Lambda(G,G')$ be? To systematically study the problem of maintaining maximal cliques in a dynamic graph, we first need to understand the magnitude of change in the set of maximal cliques.

\item What are efficient methods to compute $\Lambda(G,G')$? Is it possible to have methods that compute $\Lambda(G,G')$ quickly in cases when the size of $\Lambda(G,G')$ is small, and take longer when it is large? Do these methods scale to large graphs?

\end{itemize}

\begin{figure*}
\centering
\begin{tabular}{c}
	\includegraphics[width=.5\textwidth]{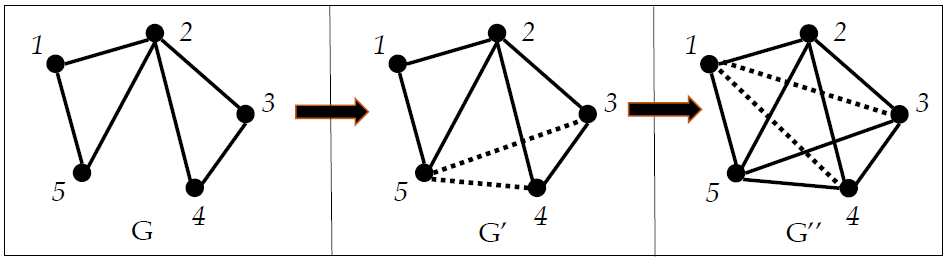}\\
\end{tabular}
\caption{\textbf{Change in maximal cliques due to addition of edges. On the left is the initial graph \textbf{$G$} with maximal cliques $\{1, 2, 5\}$ and $\{2, 3, 4\}$; On the middle is the graph \textbf{$G'$} after adding edges $(3,5)$ and $(4,5)$ to $G$ resulting in new maximal clique $\{2, 3, 4, 5\}$ and only subsumed maximal clique $\{2, 3, 4\}$; On the right is the graph \textbf{$G''$} after adding edges $(1,3)$ and $(1,4)$ to $G'$ resulting in new maximal clique $\{1, 2, 3, 4, 5\}$ and subsumed cliques $\{1, 2, 5\}$ and $\{2, 3, 4, 5\}$.}}
\label{fig:init}
\end{figure*}

\subsection{Contributions}
\noindent \textbf{(A)~Magnitude of Change in the Set of Maximal Cliques:} We present a tight analysis of the magnitude of change in the set of maximal cliques in a graph, when a set of edges are added. When a set of edges $H$ is added to graph $G = (V,E)$ resulting in graph $G' = G \cup H = (V,E \cup H)$. \\

\noindent {\bf (A.1)~} We present nearly matching upper and lower bounds on the maximum size of $\Lambda(G,G \cup H)$, taken across all possible graphs $G$ and edge sets $H$. Let $f(n)$ denote the maximum number of maximal cliques in a graph on $n$ vertices. A result of Moon and Moser~\cite{MM65} shows that $f(n)$ is approximately $3^{n/3}$. We show that by the addition of a small number of edges to the graph $G$ on $n$ vertices, it is possible to cause a change of nearly $2f(n) \approx 2 \cdot 3^{n/3}$. We also note that this is an upper bound on the magnitude of $\Lambda(G,G')$. We present this analysis in Theorem~\ref{thm:lambdan}. \\

\noindent {\bf (A.2)~} We encountered an error in the 50-year old result of Moon and Moser~\cite{MM65} on the number of maximal cliques in a graph, which is directly relevant to our bounds on the change in the set of maximal cliques. We present our correction to their result in Observation~\ref{obs:mm}.

It is easy to see that the set of maximal cliques can change by very little upon the addition of edges. For instance, adding a single edge between two vertices that are part of different components can lead to only a single new maximal clique being added (the clique consisting of a single edge), and no maximal cliques subsumed, so that the total change in the set of maximal cliques is $1$. Thus, we note that the magnitude of the change can vary significantly from one input instance to another.\\

\noindent \textbf{(B)~ Algorithm for Maintaining Maximal Cliques:} We present incremental and decremental algorithms for maintaining the set of maximal cliques of a dynamic graph. We describe result on incremental algorithms here. The results for decremental algorithms are similar.\\

\noindent \textbf{(B.1)~} We present algorithms that take as input $G$ and $H$, and enumerate the elements of $\Lambda(G,G')$ in time proportional to the size of $\Lambda(G,G')$, i.e. the magnitude of the change in the set of maximal cliques. We refer to such algorithms as {\em change-sensitive} algorithms. To our knowledge, these are the first provably change-sensitive algorithms for maintaining the set of maximal cliques in a dynamic graph. The time taken for enumerating newly formed cliques $\Lambda^{new}(G,G')$ is $\mbox{$O(\Delta^3 \rho |\Lambda^{new}(G,G')|)$}$ where $\Delta$ is the maximum degree of a vertex in $G'$ and $\rho$ is the number of edges in $H$. The time taken for enumerating subsumed cliques $\Lambda^{del}(G,G')$ is $O(2^{\rho} |\Lambda^{new}(G,G')|)$. Note that when $\rho$, the size of a batch of edges, is logarithmic in $\Delta$, the cost of enumerating subsumed cliques is of the same order as that of enumerating new cliques.

Our algorithm is based on a careful exploration of a subgraph of $G$ that is local to the set of edges that have been added. Importantly, it does not iterate through existing maximal cliques in the graph. Instead, it directly outputs the maximal cliques that have changed (either added or subsumed). Based on theoretically-efficient algorithms, we present a practical algorithm $\sdiff$ for enumerating new and subsumed cliques, and an efficient implementation. \\

\noindent {\bf (B.2)~} Our methods extend to the decremental case, to handle deletion of edges from the graph. They can also be applied to the fully dynamic case, where the change includes both the addition and deletion of edges from the graph. However, the fully dynamic case is not provably change-sensitive, as discussed in Section~\ref{sec:dec}.\\

\noindent \textbf{(C)~Experimental Evaluation} We present empirical evaluation of our algorithm using real world dynamic graphs as well as synthetic graphs. Our experimental study shows that $\sdiff$ can enumerate change in maximal cliques in a large graph with of the order of a hundred thousand vertices and millions of edges within a few seconds. Our comparison with prior and recent works show that $\sdiff$ significantly outperform prior solutions, including ones due to Stix~\cite{S04}, Ottosen and Vomlel~\cite{OV10}, and Sun et al.~\cite{SW+17}. For example, on the {\tt flickr-growth} graph, our algorithms are faster than~\cite{S04,OV10,SW+17} by a factor of more than a thousand. On the {\tt flickr-growth} graph, in order to maintain the set of maximal cliques over the insertion of 250 batches of 100 edges each, $\sdiff$ took about 40 ms, while prior techniques took anywhere from 5 mins to 2 hrs. Further details are in Section~\ref{sec:expts}.

\subsection{Prior and Related Work}
\label{ref:related}
{\bf Maximal Clique Enumeration in a Static Graph.} There is substantial prior work on enumerating maximal cliques in a static graph, starting from the algorithm based on depth-first-search due to Bron and Kerbosch~\cite{BK73}. A significant improvement to~\cite{BK73} is presented in Tomita et al.~\cite{TTT06}, leading to worst-case optimal time complexity $O(3^{n/3})$ for an $n$ vertex graph~\cite{MM65}. Other work on refinements of~\cite{TTT06,BK73} include~\cite{Koch01}, who presents several strategies for pivot selection to enhance the algorithm in~\cite{BK73}, and a fixed parameter tractable algorithm parameterized by the graph degeneracy~\cite{ELS10,ES11}.

There is a class of algorithms for enumerating structures (such as maximal cliques) in a static graph whose time complexity is proportional to the size of the output -- such algorithms are called ``output-sensitive" algorithms. Many output-sensitive structure enumeration algorithms for static graphs, including~\cite{TI+77,CN85,MU04}, can be seen as instances of a general technique called ``reverse search''~\cite{AF93}. The current best bound on the time complexity of \remove{output-efficient}output-sensitive maximal clique enumeration on a dense graph $G=(V,E)$ is due to~\cite{MU04} which runs with $O(M(n))$ time delay (the interval between outputting two maximal cliques), where $M(n)$ is the time complexity for multiplying two $n\times n$ matrix, which is $O(n^{2.376})$. Further work in this direction includes~\cite{KW+01} and~\cite{JYP88}, who consider the enumeration of maximal independent sets in lexicographic order,~\cite{CK+11}, who consider the external memory model, and~\cite{MXT17}, who consider uncertain graphs. Extensions to parallel frameworks such as MapReduce or MPI are presented in~\cite{SMT14,MT17}.

{\bf Maximal Clique Enumeration in a Dynamic Graph.} In~\cite{S04}, the authors present algorithms for tracking new and subsumed maximal cliques in a dynamic graph when a single edge is added to the graph. These algorithms are not proved to be change-sensitive, even for a single edge. The algorithm due to Stix~\cite{S04} for enumerating new maximal cliques needs to consider (and filter out) maximal cliques in the original graph that remain unaffected due to addition of new edge. This can be wasteful, in terms of update time. Hence, such an algorithm cannot be change-sensitive. For example, consider the case of a graph growing from an empty graph on $10$ vertices to a clique on $10$ vertices. Only one new maximal clique has been formed by this batch, but numerous maximal cliques arise during intermediate steps -- if all these are enumerated, then the time complexity of enumeration is inherently large, even though the magnitude of change is small.

Ottosen and Vomlel~\cite{OV10} present an algorithm to enumerate the change in set of maximal cliques, based on running a maximal clique enumeration algorithm on a smaller graph. Their algorithm supports addition of a set of edges all at once. In contrast with our work, there are no provable performance bounds for this algorithm. Another difference is that the algorithm of~\cite{OV10} may not maintain the exact change in the set of maximal cliques, in certain cases, while our algorithms can maintain the change in the set of maximal cliques exactly.

Sun et al.~\cite{SW+17} present an algorithm for enumerating the change in set of maximal cliques, based on iterating over the set of maximal cliques of the original graph to derive the set of maximal cliques of the updated graph. This need to iterate over currently existing cliques makes the algorithm expensive, especially for cases when the set of maximal cliques does not change significantly due to the update in edge set.
  
Prior algorithms for maximal clique enumeration on a dynamic graph are not proved to be change-sensitive, and do not provide a provable bound on the cost to enumerate the change, or on the magnitude of the change.

{\bf Other Queries on a Dynamic Graph.} Other works on maintaining dense structures on a dynamic graph include methods for the maintenance of $k$-cores~\cite{SG+13,LYM14}, $k$-truss communities~\cite{HC+14}, densest subgraph~\cite{BKV12,MT+15}, and maximal bicliques in a bipartite graph~\cite{DT+17}. Bicliques are complete structures in bipartite graphs, and the theory on the number of substructures and enumeration algorithms is substantially different. The other structures: $k$-core, $k$-truss, and densest subgraph, are different from maximal cliques in that they do not require complete connectivity among different vertices within the structure.

\textbf{Roadmap:} We present preliminaries in Section~\ref{sec:prelim}, followed by bounds on magnitude of change in Section~\ref{sec:sizeofchange}, algorithms for enumerating the change in Section~\ref{sec:csalgo}, and experimental results in Section~\ref{sec:expts}.

\section{Preliminaries}
\label{sec:prelim}
\newcommand{\mce}{{\tt MCE}}
We consider a simple undirected graph without self loops or multiple edges. For graph $G$, let $V(G)$ denote the set of vertices in $G$ and $E(G)$ denote the set of edges in $G$. Let $n$ denote the size of $V(G)$, and $m$ denote the size of $E(G)$. For vertex $u \in V(G)$, let $\Gamma_G(u)$ denote the set of vertices adjacent to $u$ in $G$. When the graph $G$ is clear from the context, we use $\Gamma(u)$ to mean $\Gamma_G(u)$. For edge $e=(u,v) \in E(G)$, let $G-e$ denote the graph obtained by deleting $e$ from $E(G)$, but retaining vertices $u$ and $v$ in $V(G)$. Similarly, let $G+e$ denote the graph obtained by adding edge $e$ to $E(G)$. For edge set $H$, let $G+H$ ($G-H$) denote the graph obtained by adding (subtracting) all edges in $H$ to (from) $E(G)$. Let $\Delta(G)$ denote the maximum degree of a vertex in $G$. When the context is clear, we use $\Delta$ to mean $\Delta(G)$. For vertex $v \in V(G)$, let $G-v$ denote the induced subgraph of $G$ on the vertex set $V(G)-\{v\}$, i.e. the graph obtained from $G$ by deleting $v$ and all its incident edges. Let $\cliques_{v}(G)$ denote the set of maximal cliques in $G$ containing $v$.

\textbf{Change-Sensitive Algorithms:} An algorithm for a property $P$ on a dynamic graph is said to be change-sensitive if the time complexity of enumerating the change in $P$ is linear in the magnitude of change (in $P$), and polynomial in the size of the input graph and the size of change in the set of edges. Note that the notion of ``change-sensitive'' for a dynamic graph algorithm is similar to the notion of ``output-sensitive'' in the static graph algorithm where the time complexity is proportional to the size of output times polynomial in other parameters like degree, number of edges etc. For example, see Theorem~\ref{thm:mce}.

An algorithm for a dynamic graph is called {\em incremental} if it can efficiently handle insertion of edges, {\em decremental} if it can handle deletion of edges, and {\em fully dynamic} if it can handle both insertions and deletions. For example, a parallel algorithm due to Simsiri et al.~\cite{ST+16} is an incremental algorithm for graph connectivity, an algorithm due to Thorup~\cite{T99} is a decremental algorithm, and one due to Wulff-Nilsen~\cite{W13} is a fully dynamic algorithm. Our algorithms can be viewed as a change-sensitive incremental algorithm for maximal cliques, and a change-sensitive decremental algorithm for maximal cliques.

\textbf{Results for Static Graphs:} We present some known results about maximal cliques on static graph. Nearly 50 years ago, Moon and Moser~\cite{MM65} considered the question: ``what is the maximum number of maximal cliques that can be present in an undirected graph on $n$ vertices", and gave the following answer. Let $f(n)$ denote the maximum possible number of maximal cliques in a graph on $n$ vertices. A graph on $n$ vertices that achieves $f(n)$ maximal cliques is called a ``Moon-Moser" graph.
\begin{theorem}[Theorem 1, Moon and Moser, \cite{MM65}]
\label{thm:MM65}
\begin{eqnarray*}
f(n) & = & 3^{\frac{n}{3}} \qquad \mbox{~if~ $n \mod 3 = 0$} \\
     & = & 4 \cdot 3^{\frac{n-4}{3}} \qquad \mbox{~if~ $n \mod 3 = 1$} \\ 
     & = & 2 \cdot 3^{\frac{n-2}{3}} \qquad \mbox{~if~ $n \mod 3 = 2$} \\
\end{eqnarray*}
\end{theorem}

We use as a subroutine an output-sensitive algorithm for enumerating all maximal cliques within a (static) graph, using time proportional to the number of maximal cliques. There are multiple such algorithms, for example, due to Tsukiyama et al.~\cite{TI+77}, and due to Makino and Uno~\cite{MU04}. We use the following result due to Chiba and Nishizeki since it provides one of the best possible time complexity bounds for general graphs.  Better results are possible for dense graphs~\cite{MU04} and our algorithm can use other methods as a subroutine also.
\begin{theorem}[Chiba and Nishizeki, \cite{CN85}]
\label{thm:mce}
There is an algorithm $\mce(G)$ that enumerates all maximal cliques in graph $G$ using time $O(\alpha m\mu)$ where $\mu$ is the number of maximal cliques in $G$ and $\alpha$ is the arboricity of $G$\footnote{The arboricity of a graph is no more than the maximum vertex degree of the graph, but could be significantly lesser.} The total space complexity of the algorithm is $O(n+m)$.
\end{theorem}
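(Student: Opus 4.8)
The plan is to reconstruct the incremental ``reverse search'' scheme of Tsukiyama et al.\ that underlies the Chiba--Nishizeki algorithm, and then replace its running-time analysis by one driven by the arboricity of $G$. Fix an arbitrary order $v_1,\dots,v_n$ on $V(G)$ and let $G_i$ be the subgraph induced by $\{v_1,\dots,v_i\}$, so $G_n=G$. The first ingredient is the transition rule from $\cliques(G_{i-1})$ to $\cliques(G_i)$: for each $K\in\cliques(G_{i-1})$ put $K'=K\cap\Gamma(v_i)$; if $K'=K$ then $K$ is no longer maximal and is replaced by $K\cup\{v_i\}$, whereas if $K'\neq K$ then $K$ is still maximal in $G_i$ and, in addition, $K'\cup\{v_i\}$ is a clique of $G_i$ that is a candidate new maximal clique. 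Since a single candidate may be produced from several parents $K$, I would attach to it a lexicographic ``first occurrence'' test that names a unique canonical parent, so that each element of $\cliques(G_i)$ is emitted exactly once; iterating $i=1,\dots,n$ enumerates $\cliques(G)$, and correctness reduces to verifying the transition rule and that the canonical test selects a unique ancestor for every maximal clique of every $G_i$.

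Next I would implement this as a depth-first traversal of the forest whose nodes are the pairs $(i,K)$ with $K\in\cliques(G_i)$ and whose edges are the canonical-parent links. Since the traversal never needs more than the current root-to-node path, and moving between a node and a child changes the clique only locally (it adds $v_i$ and deletes $K\setminus K'$), the algorithm keeps the adjacency lists, a constant number of length-$n$ marking/scratch arrays, and an undo log that is popped on backtracking; this yields the claimed $O(n+m)$ space instead of the $\Theta(n^2)$ one would incur by storing a full clique at every recursion level.

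The time bound is the crux. At the node where $v_i$ is processed with parent clique $K$, the work is dominated by (i) forming $K\cap\Gamma(v_i)$, (ii) testing maximality of the candidate $K'\cup\{v_i\}$, i.e.\ checking $\bigcap_{w\in K'\cup\{v_i\}}\Gamma(w)\subseteq K'\cup\{v_i\}$, and (iii) running the lexicographic duplication test. Writing $d(w)=|\Gamma(w)|$ for the degree, and using neighbor marking, each of these reduces to adjacency-list scans whose total cost can be charged to the edges of $G$ so that a scan through $\Gamma(v_i)$ contributes $\min\{d(u),d(v_i)\}$ to each incident edge $(u,v_i)$; summing along the root-to-leaf path of each of the $\mu$ output cliques and over all of $G$'s edges, and invoking the arboricity inequality $\sum_{(u,v)\in E(G)}\min\{d(u),d(v)\}\le 2\alpha m$, collapses the total to $O(\alpha m\mu)$. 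The main obstacle is exactly this amortization: each of (i)--(iii) has to be coded so that its cost is honestly an edge-indexed $\min$-degree term rather than $|K|\cdot d(v_i)$ or $\Theta(n)$, and one must verify that the recursion incurs these edge-charges only $O(\mu)$ times overall; by comparison the transition rule, the duplication test, and the undo-log space management are routine bookkeeping.
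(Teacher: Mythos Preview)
The paper does not prove this theorem at all: it is quoted in the Preliminaries section as a known result of Chiba and Nishizeki~\cite{CN85} and used only as a black-box subroutine inside Algorithm~\ref{algo:newc1}. There is therefore no ``paper's own proof'' to compare your proposal against.

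That said, your sketch is a faithful outline of the actual Chiba--Nishizeki argument. Their algorithm does build on the Tsukiyama et al.\ incremental scheme over the prefix graphs $G_i$, replaces the naive per-node cost analysis by one that charges work to edges via $\min\{d(u),d(v)\}$, and then invokes the arboricity inequality $\sum_{(u,v)\in E}\min\{d(u),d(v)\}\le 2\alpha m$ to get $O(\alpha m)$ per output clique. Your identification of the main obstacle is also accurate: the nontrivial part of~\cite{CN85} is exactly the careful implementation of the maximality test and the lexicographic duplication test so that each genuinely amortizes to an edge-indexed $\min$-degree sum rather than $\Theta(n)$ or $|K|\cdot d(v_i)$; that is where the details lie, and your proposal acknowledges this without fully discharging it. If you were asked to reproduce the proof you would need to spell out those implementations, but for the purposes of this paper the theorem is simply imported.
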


\section{Magnitude of Change}
\label{sec:sizeofchange}
\newcommand{\eps}{\varepsilon}

From prior work~\cite{MM65}, the maximum number of maximal cliques in an $n$-vertex graph, denoted by $f(n)$ is known (see Theorem~\ref{thm:MM65}). The result of~\cite{MM65} is relevant for static graphs. In the case of a dynamic graph, a different question is more relevant: {\em what is the maximum change in the set of maximal cliques, that can result from the addition of edges to the graph?} This will give us a bound on the worst case complexity of enumerating the change in the set of maximal cliques.

\subsection{Maximum Possible Change in Maximal Cliques}
We consider the maximum change in the set of maximal cliques upon the addition of edges to the graph. For an integer $n$, let $\lambda(n)$ be the maximum size of $\Lambda(G,G+H)$ taken over all possible $n$ vertex graphs $G$ and edge sets $H$. We present the following result with nearly tight bounds on the value of $\lambda(n)$. Interestingly, our results show that it is possible to change the set of maximal cliques by as much as $\approx 2 \cdot 3^{n/3}$ by the addition of only a few edges to the graph.

\begin{theorem}
\label{thm:lambdan}
\begin{eqnarray*}
\frac{16}{9}f(n) \le & \lambda(n) & < 2f(n) \qquad \mbox{~if~ $(n \mod 3) = 0$} \\
                             & \lambda(n) & = 2f(n)  \qquad \mbox{~if~ $(n \mod 3) = 1$} \\
\frac{11}{6}f(n) \le & \lambda(n) & < 2f(n) \qquad \mbox{~if~ $(n \mod 3) = 2$} \\
\end{eqnarray*}
\end{theorem}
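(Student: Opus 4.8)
The plan is to handle the upper bound and the lower bound separately, and within each to split into the three residue classes of $n \bmod 3$.

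For the upper bound $\lambda(n) \le 2f(n)$: the trivial observation is that $\Lambda(G,G+H) \subseteq \cliques(G) \cup \cliques(G+H)$, so $|\Lambda(G,G+H)| \le |\cliques(G)| + |\cliques(G+H)| \le 2f(n)$. To get the \emph{strict} inequality for $n \bmod 3 \in \{0,2\}$, I would argue that equality $|\Lambda| = 2f(n)$ forces $\cliques(G)$ and $\cliques(G+H)$ to be disjoint, each of size exactly $f(n)$, i.e. both $G$ and $G+H$ must be Moon--Moser graphs on $n$ vertices. When $n \bmod 3 = 0$ the Moon--Moser graph is the disjoint-union-of-triangles complete multipartite graph $K_{3,3,\ldots,3}$, which is the \emph{unique} extremal graph (this uniqueness is classical and I would invoke it); then $G = G+H$ as graphs — contradicting $H \neq \emptyset$ — unless the two extremal graphs differ, but there is only one, so no edge can be added while staying extremal. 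A similar rigidity argument rules out $n \bmod 3 = 2$, where the extremal graph is $K_{2} \times$ (union of triangles) type, i.e. $K_{3,\ldots,3,2}$; again adding any edge destroys the complete-multipartite structure or merges parts, strictly decreasing the clique count below $f(n)$, so equality is impossible. For $n \bmod 3 = 1$ the claim is that equality \emph{is} achievable, so here the upper bound argument gives $\le 2f(n)$ and the matching construction (below) gives $= 2f(n)$.

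For the lower bound: in the $n \bmod 3 = 1$ case I need a graph $G$ and edge set $H$ with $\cliques(G)$, $\cliques(G+H)$ disjoint and each of size $f(n)$. A natural attempt, generalizing the single-edge construction of Lemma~\ref{l6}: take the Moon--Moser graph $G_0$ on $n-1$ vertices (so $n-1 \bmod 3 = 0$, giving $3^{(n-1)/3}$ cliques) on a vertex set $V'$, add a new vertex joined to everything — but one checks $4\cdot 3^{(n-4)/3}$ vs $2\cdot 3^{(n-1)/3}$, so I instead want the partition pattern that realizes $f(n)$ for $n \equiv 1$, namely one part of size $4$ (contributing factor $4$) and the rest triangles. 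The construction should split the size-$4$ part into two pairs connected to everything else but with the two pairs mutually non-adjacent except inside, so that before adding $H$ every maximal clique picks one vertex from each pair in a constrained way and after adding $H$ (the two missing cross edges, making the size-$4$ part complete) the cliques flip entirely; counting should give $f(n)$ before and $f(n)$ after, disjointly, hence $2f(n)$. For $n \bmod 3 = 0$, to reach $\frac{16}{9}f(n) = \frac{16}{9}\cdot 3^{n/3}$ I would take a Moon--Moser-like graph on $n-4$ vertices and attach a $4$-vertex gadget whose internal edge-addition multiplies the count appropriately: $\frac{16}{9}\cdot 3^{n/3} = 16 \cdot 3^{(n-6)/3} \cdot \frac{1}{... }$ — more cleanly, $\frac{16}{9}f(n)$ equals (number of cliques of a triangle-union on $n-4$ vertices, which is $3^{(n-4)/3}\cdot\text{(something)}$) times a constant coming from a gadget analogous to the $n\equiv1$ case, arranged so that $|\cliques(G)| + |\cliques(G+H)|$ hits $\frac{16}{9}f(n)$ with the two sets disjoint. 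Similarly $\frac{11}{6}f(n)$ for $n \bmod 3 = 2$ comes from a small gadget combined with triangles. I would present explicit gadgets on $4$ (resp. $5$ or $6$) vertices, verify by direct enumeration that the gadget alone contributes the right local factor and that adding its internal edges makes the old and new maximal cliques disjoint, and then multiply through by the $3^{k/3}$ from the triangle part using the same product-over-components reasoning as in Lemma~\ref{l6}.

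The main obstacle I expect is the lower-bound constructions: one must simultaneously (i) maximize $|\cliques(G)|$, (ii) maximize $|\cliques(G+H)|$, and (iii) keep the two families disjoint, and these pull against each other — a graph with many maximal cliques tends to share many of them with nearby graphs. The gadget design is where all the cleverness lies, and checking disjointness (that \emph{no} maximal clique survives the edge additions, and \emph{no} new one coincides with an old one) is the delicate verification. The strict-inequality half of the upper bound is a secondary obstacle, since it rests on uniqueness/rigidity of the Moon--Moser extremal graphs, which must be cited or reproved.
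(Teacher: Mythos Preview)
Your upper-bound argument and the strict-inequality step for $n\bmod 3\in\{0,2\}$ are exactly what the paper does: $|\Lambda|\le |\cliques(G)|+|\cliques(G+H)|\le 2f(n)$, with equality forcing both $G$ and $G+H$ to be Moon--Moser extremal, which is impossible when the extremal graph is unique. Good.

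The gap is in your lower-bound constructions. Your $n\equiv 1$ gadget, as written, does not reach $2f(n)$. Under any reading of ``two pairs mutually non-adjacent except inside'' plus ``add the two missing cross edges,'' the before/after clique counts on the $4$-vertex part come out to $(2,4)$, $(4,1)$, or $(4,2)$, giving at most $6\cdot 3^{(n-4)/3}=\tfrac{3}{2}f(n)$, not $8\cdot 3^{(n-4)/3}=2f(n)$. The cases $n\equiv 0,2$ are left as unspecified gadgets, so there is nothing to check there.

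The paper avoids ad hoc gadget design with a single parametrized construction: take $V_1$ of size $\varepsilon$ as an \emph{independent set}, take $V_2$ of size $n-\varepsilon$ as a Moon--Moser graph, and join $V_1$ to $V_2$ completely. Then $|\cliques(G)|=\varepsilon\cdot f(n-\varepsilon)$, since each maximal clique is (one vertex of $V_1$) $\cup$ (a maximal clique of $V_2$). Let $H$ be the edges that turn $V_1$ itself into a Moon--Moser graph on $\varepsilon$ vertices; then $|\cliques(G+H)|=f(\varepsilon)\cdot f(n-\varepsilon)$. Disjointness is immediate by counting $V_1$-vertices per clique: exactly one in $G$, at least two in $G+H$. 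Hence
\[
|\Lambda(G,G+H)|=\bigl(\varepsilon+f(\varepsilon)\bigr)\,f(n-\varepsilon),
\]
and optimizing over $\varepsilon$ gives $\varepsilon=4$ for $n\equiv 0$ (value $\tfrac{16}{9}f(n)$), $\varepsilon=4$ for $n\equiv 1$ (value $2f(n)$, using the $4$-cycle as the Moon--Moser graph on $4$ vertices), and $\varepsilon=5$ for $n\equiv 2$ (value $\tfrac{11}{6}f(n)$). The point you missed is that the ``before'' state of the gadget should be the independent set, which already contributes $\varepsilon$ cliques; your attempts put edges inside $V_1$ from the start and thereby lose count. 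Once you see this, the ``delicate verification'' of disjointness you were worried about becomes a one-line parity argument.
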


\begin{proof}
We first note that $\lambda(n) \le 2f(n)$ for any integer $n$. To see this, note that for any graph $G$ on $n$ vertices and edge set $H$, it must be true from Theorem~\ref{thm:MM65} that $|\cliques(G)| \le f(n)$ and $|\cliques(G+H)| \le f(n)$. Since $|\Lambda^{new}(G,G+H)| \le |\cliques(G+H)|$ and $|\Lambda^{del}(G,G+H)| \le |\cliques(G)|$, we have $|\Lambda(G,G+H)| = |\Lambda^{new}(G,G+H)| + |\Lambda^{del}(G,G+H)| \le |\cliques(G)| + |\cliques(G+H)| \le 2f(n)$.

The result of Moon and Moser~\cite{MM65} states that for $n \ge 2$, there is only one graph $H_n$ on $n$ vertices (subject to isomorphism) that has $f(n)$ maximal cliques. We show below that there is an error in this result for the case $(n \mod 3) = 1$. Note that the result is still true in the cases $(n \mod 3)$ equals $0$ or $2$. Thus for the cases where $(n \mod 3)$ is $0$ or $2$, adding or deleting edges from $H_n$ leads to a graph with fewer than $f(n)$ maximal cliques, so that we can never achieve a change of $2f(n)$ maximal cliques. Thus we have that for $(n \mod 3)$ equal to $0$ or $2$, $\lambda(n)$ is strictly less than $2f(n)$. The case of $(n \mod 3) = 1$ is discussed separately (see Observation~\ref{obs:mm} below).

We next show that there exists a graph $G$ on $n$ vertices and an edge set $H$ such that the size of $\Lambda(G,G+H)$ is large. See Figure~\ref{fig:size-of-change} for an example. Graph $G$ is constructed on $n$ vertices as follows. Let $\eps > 3$ be an integer. Choose $\eps$ vertices in $V(G)$ into set $V_1$. Let $V_2 = V \setminus V_1$. Edges of $G$ are constructed as follows. 
\begin{itemize}
\itemsep0em 
\item
Each vertex in $V_1$ is connected to each vertex in $V_2$.
\item
Edges are added among vertices of $V_2$ to make the induced subgraph on $V_2$ a Moon-Moser graph on $(n-\eps)$ vertices. Let $G_2$ denote this induced subgraph on $V_2$, which has $f(n-\eps)$ maximal cliques.
\item 
There are no edges among vertices of $V_1$ in $G$.
\end{itemize}

It is clear that for each maximal clique $c$ in $G_2$ and vertex $v \in V_1$, there is a maximal clique in $G$ by adding $v$ to $c$. Thus the number of maximal cliques in $G$ is $|V_1| \cdot |\cliques(G_2)|$. Hence we have 
\begin{equation}
\label{eqn:1}
|\cliques(G)|  = \eps \cdot f(n-\eps)
\end{equation}

To graph $G$, we add the edge set $H$, constructed as follows. $H$ consists of edges connecting vertices in $V_1$, to form a Moon-Moser graph on $\eps$ vertices. Let $G' = G + H$. We note that $\cliques(G)$ and $\cliques(G')$ are disjoint sets. To see this, note that each maximal clique in $G$ contains exactly one vertex from $V_1$, since no two vertices in $V_1$ are connected to each other in $G$. On the other hand, each maximal clique in $G'$ contains more than one vertex from $V_1$, since each vertex $v \in V_1$ is connected to at least one other vertex in $V_1$ in $G'$. Hence, $\Lambda(G,G') = \cliques(G) \cup \cliques(G')$, and
\begin{equation}
\label{eqn:2}
|\Lambda(G,G')| = |\cliques(G)| + |\cliques(G')|
\end{equation}

To compute $|\cliques(G')|$, note that since each vertex in $V_1$ is connected to each vertex in $V_2$, for each maximal clique in $G'(V_1)$ and each maximal clique in $G'(V_2)$, we have a unique maximal clique in $G'$. There are $f(\eps)$ maximal cliques in $G'(V_1)$ and $f(n-\eps)$ maximal cliques in $G'(V_2)$, and hence we have
\begin{equation}
\label{eqn:3}
|\cliques(G')| = f(\eps) \cdot f(n-\eps)
\end{equation}

Putting together Equations~\ref{eqn:1},~\ref{eqn:2}, and~\ref{eqn:3} we get 

\begin{equation}
\label{eqn:4}
|\Lambda(G,G')| = (\eps + f(\eps)) \cdot f(n-\eps)
\end{equation}

Let $F(\eps) = (\eps + f(\eps))f(n-\eps)$. We compute the value of $\eps (> 3)$ at which $F(\eps)$ is maximized. To do this, we consider three different cases depending on the value of $(n \mod 3)$, and omit the calculations. If $n\mod 3 = 0$, $F(\eps)$ is maximized at $\eps = 4$ and the maximum value $F(4) = \frac{16}{9}f(n)$. If $n\mod 3 = 1$, $F(\eps)$ is maximized at $\eps = 4$ and $F(4) = 2f(n)$. And finally if $n\mod 3 = 2$, $F(\eps)$ is maximized at $\eps= 5$ and $F(5) = \frac{11}{6}f(n)$. This completes the proof.
\qed \end{proof}

\begin{figure*}
\centering
\begin{tabular}{c}
	\includegraphics[width=.5\textwidth]{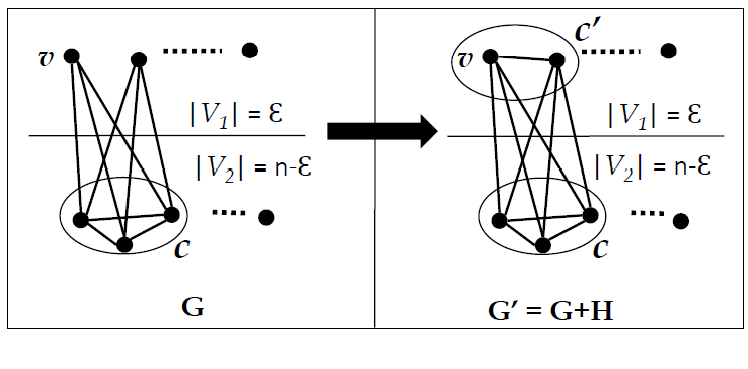}\\
\end{tabular}
\caption{\textbf{Construction showing a large change in set of maximal cliques when a few edges are added. $V_1$ is the set of vertices above the horizontal line and $V_2$ is the set of vertices below the horizontal line where $V_1\cup V_2 = V$ and $V_1\cap V_2 = \phi$. On the left is $G$, the original graph with $n$ vertices where each vertex in $V_1$ is connected to each vertex in $V_2$, and $V_1$ is an independent set. In $G$, the induced subgraph $G_2$ on vertex set $V_2$ forms a {\tt Moon-Moser} graph. On the right is $G'$, the graph formed after adding edge set $H$ to $G$ such that the induced subgraph on vertex set $V_1$ becomes a {\tt Moon-Moser} graph. Let $c$ be a clique in $G_2$, and $c'$ a new clique in $G'$ formed among vertices in $V_1$. Note that $c\cup\{v\}$ was a maximal clique in $G$, and is now subsumed by a new maximal clique $c\cup c'$.}}
\label{fig:size-of-change}
\end{figure*}

\begin{figure*}
\centering
\begin{tabular}{c}
\includegraphics[width=.5\textwidth]{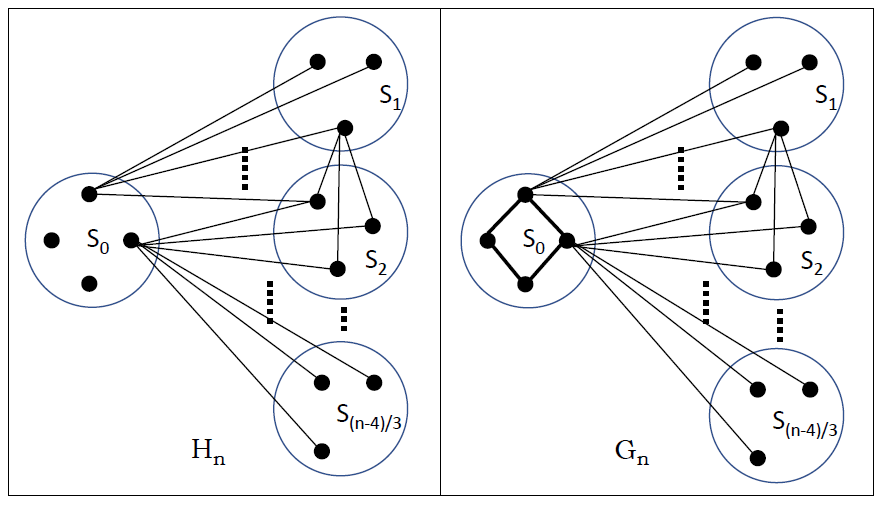}\\
\end{tabular}
\caption{\textbf{On the left is $H_n$ where each vertex $v$ in $S_i$ is connected to each vertex $u$ in $S_j$, $i \neq j$. On the right is $G_n$ which is formed from $H_n$ by adding four edges to $S_0$. For the case $(n \mod 3) = 1$, $H_n$ and $G_n$ are non-isomorphic graphs on $n$ vertices, with $f(n)$ maximal cliques each, showing a counterexample to Theorem 2 of Moon and Moser~\protect\cite{MM65}.}}
\label{fig:moon-moser}
\end{figure*}

\subsection{An Error in a Result of Moon and Moser (1965)}
Moon and Moser~\cite{MM65}, in Theorem 2 in their paper, claim ``For any $n \ge 2$, if a graph $G$ has $n$ nodes and $f(n)$ cliques, then $G$ must be equal to $H_n$", where $H_n$ is a specific graph, described below. We found that this theorem is incorrect for the case when $(n \mod 3) = 1$.

The error is as follows (see Figure~\ref{fig:moon-moser}). For $(n \mod 3) = 1$, the graph $H_n$ is constructed on vertex set $V_n=\{1,2,\ldots,n\}$ by taking vertices $\{1,2,3,4\}$ into a set $S_0$ and dividing the remaining vertices into groups of three, as sets $S_1,S_2,\ldots,S_{\frac{n-4}{3}}$. In graph $H_n$, edges are added between any two vertices $u,v$ such that $u \in S_i$, $v \in S_j$ and $i \neq j$. This graph $H_n$ has $4\cdot 3^{\frac{n-4}{3}}$ maximal cliques, since we can make a maximal clique by choosing a vertex from $S_0$ (4 ways), and one vertex from each $S_i, i > 0$ (3 ways for each such $S_i, i = 1\ldots (n-4)/3$). 

Contradicting Theorem 2 in~\cite{MM65}, we show there is another graph $G_n$ that is different from $H_n$, but still has the same number of maximal cliques. $G_n$ is the same as $H_n$, except that the vertices within $S_0$ are connected by a cycle of length $4$. In this case, we can still construct $4\cdot 3^{\frac{n-4}{3}}$ maximal cliques, since we can make a maximal clique by choosing two connected vertices in $S_0$ (4 ways to do this), and one vertex from each $S_i, i > 0$ (3 ways for each such $S_i, i = 1\ldots (n-4)/3$).

\begin{observation}
\label{obs:mm}
For the case $(n \mod 3) = 1$, there are two distinct non-isomorphic graphs $H_n$ and $G_n$  described above, that have $4\cdot 3^{\frac{n-4}{3}}$ maximal cliques, which is the maximum possible. This is a correction to Theorem 2 of Moon and Moser~\cite{MM65}, which states that there is only one such graph, $H_n$.
\end{observation}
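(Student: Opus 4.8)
The plan is to establish the two assertions in the Observation separately: that $G_n$ has exactly $f(n)=4\cdot 3^{(n-4)/3}$ maximal cliques, and that $G_n$ is not isomorphic to $H_n$.

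For the count of maximal cliques in $G_n$, I would first exhibit $4\cdot 3^{(n-4)/3}$ distinct maximal cliques explicitly, exactly as in the discussion preceding the statement: choose one of the four edges of the $4$-cycle on $S_0$, together with one vertex from each of the $(n-4)/3$ triples $S_1,\dots,S_{(n-4)/3}$. Each such set is a clique of $G_n$, since the two chosen $S_0$-vertices are adjacent and every vertex of $S_i$ with $i\ge 1$ is adjacent to everything outside $S_i$; and it is maximal, since $C_4$ is triangle-free (so no third vertex of $S_0$ is adjacent to both chosen ones) and each $S_i$ is an independent set (so no second vertex of $S_i$ can be added). Distinct choices give distinct vertex sets, hence $|\cliques(G_n)|\ge 4\cdot 3^{(n-4)/3}$. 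Now I invoke Theorem~\ref{thm:MM65}: since $f(n)=4\cdot 3^{(n-4)/3}$ is the maximum number of maximal cliques over all $n$-vertex graphs, we must have $|\cliques(G_n)|=4\cdot 3^{(n-4)/3}=f(n)$. (If a self-contained argument is preferred, one shows directly that every maximal clique of $G_n$ meets $S_0$ in an edge of the $4$-cycle — it can be neither disjoint from $S_0$ nor meet $S_0$ in a single vertex, because every vertex of $S_0$ is adjacent to all of $V_n\setminus S_0$ and also has a neighbour inside $S_0$ — and meets each $S_i$, $i\ge 1$, in exactly one vertex; this yields the matching upper bound without appeal to Theorem~\ref{thm:MM65}.)

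For non-isomorphism, I would compare edge counts: $G_n$ is obtained from $H_n$ by adding exactly the four edges of the cycle on $S_0$, and $S_0$ is independent in $H_n$, so $|E(G_n)|=|E(H_n)|+4$. Two graphs on the same number of vertices with different numbers of edges cannot be isomorphic, so $G_n\not\cong H_n$. (Equivalently, the degree sequences differ: the four vertices of $S_0$ have degree $n-4$ in $H_n$ but $n-2$ in $G_n$, while all remaining vertices have degree $n-3$ in both.) In the degenerate case $n=4$, where there are no triples $S_i$, this reduces to observing that the empty graph on four vertices and the $4$-cycle each have four maximal cliques and are obviously non-isomorphic.

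I do not expect a serious obstacle; the only step needing care is the completeness/maximality argument for the cliques of $G_n$ — that is, correctly ruling out maximal cliques meeting $S_0$ in zero or one vertices — which rests precisely on $C_4$ being triangle-free and on $S_0$ being completely joined to $V_n\setminus S_0$. The non-isomorphism and the arithmetic are routine bookkeeping.
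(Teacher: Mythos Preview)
Your proposal is correct and follows essentially the same approach as the paper: the paper's argument is the informal discussion immediately preceding the Observation, which constructs the $4\cdot 3^{(n-4)/3}$ maximal cliques of $G_n$ via exactly your edge-of-$C_4$-plus-one-vertex-per-triple scheme and implicitly appeals to Theorem~\ref{thm:MM65} for the matching upper bound. Your explicit maximality check and your edge-count argument for non-isomorphism fill in details the paper leaves to the reader, but the structure is the same.
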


This observation enables us to have $\lambda(n) = 2f(n)$ for the case $(n \mod 3) = 1$. By starting with graph $H_n$ and by adding edges to make it $G_n$, we remove $f(n)$ maximal cliques and introduce $f(n)$ maximal cliques, leading to a total change of $2f(n)$.

\remove{
\subsection{Bound on the size of change parameterized by max degree $\Delta$ of $G$}

\begin{lemma}
For any $v\in V(G)$, $|\cliques_{v}(G)| \leq f(\Delta)$ where $\Delta$ is the maximum degree of $G$.
\end{lemma}

\begin{proof}
Each maximal clique $c$ of $G$ that contains $v$ must be within the neighborhood of $v$ and the size of this neighborhood can be at most $\Delta$. Thus, there can be at most $f(\Delta)$ maximal cliques each of which contains $v$. 
\qed \end{proof}

\begin{lemma}
For any $e\in E(G)$, the number of maximal cliques in $G$ containing $e$ is at most $f(\Delta-1)$.
\end{lemma}

\begin{proof}
Suppose each of the vertices $u$ and $v$ of an edge $e$ has degree $\Delta$ and that $u$ and $v$ has the same neighborhood. Thus the size of the neighborhood except $u$ and $v$ is $\Delta-1$. There can be at most $f(\Delta-1)$ maximal cliques within these vertices. Now when we add $u$ and $v$ to each of such maximal clique, a new maximal clique will be formed that contains the edge $e$. Thus there can be at most $f(\Delta-1)$ such maximal cliques.
\qed \end{proof}

\begin{lemma}
For a graph $G$ on $n$ vertices and edge $e\notin E(G)$, the size of $\Lambda(G,G+e)$ can be no larger than $3f(\Delta)$.
\end{lemma}

\begin{proof}
Proof by contradiction. Suppose there exists  a graph $G$ and edge $e\notin E(G)$ such that $|\Lambda(G,G+e)| > 3f(\Delta)$. Then either $|\cliques(G+e)\setminus\cliques(G)| > f(\Delta)$ or $|\cliques(G)\setminus\cliques(G+e)| > 2f(\Delta)$.

\textbf{Case 1:} $|\cliques(G+e)\setminus\cliques(G)| > f(\Delta)$: This means that the total number of new maximal cliques is more than $f(\Delta)$. Assume that $e = (u,v)$, $U = \Gamma_G(u) = \Gamma_G(v)$, and $|U| = \Delta$. Then there can be at most $f(\Delta)$ maximal cliques in the subgraph of $G$ induced by $U$. For each such maximal clique, there will be a new maximal clique by adding $e$. Thus the number of new maximal cliques is at most $f(\Delta)$. A contradiction.

\textbf{Case 2:} $|\cliques(G)\setminus\cliques(G+e)| > 2f(\Delta)$: First note that each subsumed clique contains either $u$ or $v$ but not the both. Next, the number of maximal cliques in $G$ is at most $f(\Delta)$ containing $u$ and $f(\Delta)$ containing $v$. Thus the total number of subsumed cliques is at most $2f(\Delta)$. A contradiction. 
\qed \end{proof}

\begin{lemma}
For any integer $n > 2$ there exists a graph $G$ on $n$ vertices with maximum degree $\Delta$ and an edge $e\notin E(G)$ such that $|\Lambda(G,G+e)| = 3f(\Delta)$.
\end{lemma}

\begin{proof}
We use proof by construction. Let $G$ be a graph on $n$ vertices and $X$ be a subset of $V(G)$ of size $\Delta$. Also assume that $X$ forms a Moon-Moser graph.Note that degree of every vertex of $X$ is at most $\Delta-1$ within $X$. Fix two vertices $u$ and $v$ and connect from each of $u$ and $v$ to all the vertices of $X$. Connect among rest of the vertices in such a way that the maximum degree in $G$ does not exceed $\Delta$. This completes the construction of $G$.

First note that total number of maximal cliques containing $u$ in $G$ is $f(\Delta)$ and total number of maximal cliques containing $v$ in $G$ is $f(\Delta)$. This is because, each maximal clique in $X$ will be maximal in $G$ when adding $u$ and when adding $v$. Total number of such maximal cliques is thus $2f(\Delta)$ that contains either $u$ or $v$ but not both.

Next note that total number of maximal cliques in $G+e$ containing $e = (u,v)$ is $f(\Delta)$. This is because, each maximal clique in $X$ becomes a new maximal clique in $G+e$ when $e$ is added and these are all inclusive new maximal cliques because, neither $u$ nor $v$ is connected to any other vertex in $G$ except $X$. Also note that each maximal clique in $G$ containing either $u$ or $v$ will becomes part of a new maximal clique and thus will be subsumed.

Thus, $|\Lambda(G,G+e)| = |\cliques_u(G)| + |\cliques_v(G)| + f(\Delta) = 3f(\Delta)$. 
\qed \end{proof}
}

\remove{
\subsection{Bound on the size of change parameterized by degeneracy $d$ of $G$}
Graph degeneracy is a measure of graph sparsity. Degeneracy is the smallest value $d$ such that every subgraph of $G$ has at least a vertex of degree at most $d$. In other words, it is the maximum over minimum degrees of all the subgraphs of $G$. When the degeneracy $d$ is known, the maximum number of maximal cliques of $d$-degenerate graph is different from the moon-moser bound as shown in~\cite{ELS10}. First we prove a lower bound on $|\Lambda(G,G+e)|$ as follows:

\begin{lemma}
For any integer $n > 2$, there exists a graph $G$ with $n$ vertices and degeneracy $d$ such that $|\Lambda(G,G+e)|$  is at least $3(n-d)f(d-2)$ where $e\notin E(G)$.
\end{lemma}

\begin{proof}
We prove by construction. Suppose there is a graph $G$ with $n$ vertices and $V = \{1, 2, 3, ..., n-1, n\}$. Now assume that first $n-d$ vertices (denote as set $A$) form an independent set, followed by next $(d-2)$ vertices (denote as set $B$) that forms a moon-moser graph, followed by vertices $n-1$ and $n$ that are not connected but each is connected to the rest $n-1$ vertices. Each vertex in set $A$ is connected to each vertex in set $B$. Clearly, the degeneracy of $G$ is $d$. Now there are $(n-d)f(d-2)$ maximal cliques in $G' = G-\{n-1,n\}$ and assume $\cliques(G')$ denotes the set of maximal cliques in $G'$. For each clique $c'$ in $\cliques(G')$, we get two maximal cliques in $G$, once by adding $n-1$ to $c'$ and then by adding $n$ to $c'$. Thus, $|\cliques(G)| = 2(n-d)f(d-2)$. When we add an edge $e$ between $n-1$ and $n$ each clique in $\cliques(G')$ is extended by both $n-1$ and $n$ and becomes a maximal clique in $G+e$. Note that $\cliques(G)$ and $\cliques(G+e)$ are disjoint set. To see this, observe that each maximal clique in $G$ contains either $n-1$ or $n$ since $n-1$ and $n$ are not connected in $G$. On the other hand, each maximal clique in $G+e$ contains both $n-1$ and $n$ and $|\cliques(G+e)| = |\cliques(G')| = (n-d)f(d-2)$. Hence, $\Lambda(G,G+e) = \cliques(G)\cup\cliques(G+e)$ and $|\Lambda(G,G+e)| = |\cliques(G)| + |\cliques(G+e)| = 3(n-d)f(d-2)$.
\qed \end{proof}

Next, we will prove an upper bound on $|\Lambda(G,G+e)|$ in Lemma~\ref{upper_bound}. For proving an upper bound, we use the following results and some supporting lemmas in obtaining bound on the size of change in our case:

\begin{lemma}[Theorem 3 of~\cite{ELS10}]
Let $d$ be a multiple of $3$ and $n \geq d+3$. Then the largest possible number of maximal cliques in an n-vertex graph with degeneracy $d$ is $(n-d)3^{\frac{d}{3}}$.
\end{lemma}

This result assumes that $d$ is multiple of $3$. However, we can generalize the bound that works for any  value of the degeneracy $d$ as follows:

\begin{lemma}\label{degen:1}
Let $d$ be the degeneracy of an $n$ vertex graph $G$. Then maximum number of maximal cliques in $G$ is $(n-d)f(d)$. 
\end{lemma}
\begin{proof}
\textbf{Upper Bound: } We will show that number of maximal cliques in a graph $G$ with $n$ vertices and degeneracy $d$ is no more than $(n-d)f(d)$. For this, assume that there is a degeneracy ordering of the vertices of $G$ for which there are at most $d$ neighbors of each vertex that come later in the ordering. Consider first $(n-d)$ vertices in this ordering and denote this set of vertices as $S$. Now, each vertex in $S$ is connected to at most $d$ vertices in set $V\setminus S$. Also, there are precisely $d$ vertices in the set $S' = V\setminus S$. Now, the maximum possible number of maximal cliques formed using the vertices in $S'$ is $f(d)$. As each vertex in $S$ is connected to at most $d$ vertices in $S'$, each such vertex can participate in at most $f(d)$ maximal cliques. Thus, maximum number of maximal cliques in $G$ is $(n-d)f(d)$. 

\textbf{Lower Bound: } Next we will show that there exists a graph $G$ with $n$ vertices and degeneracy $d$ such that there are $(n-d)f(d)$ maximal clique. We will show this using a construction. Assume that vertices in $S'$ forms a moon-moser graph. Next, the vertex set $S$ is an independent set and each vertex in $S$ is connected to all vertices in $S'$. This completes the construction of $G$. Note that, number of maximal cliques in the induced subgraph with vertex subset $S'$ is $f(d)$. Each maximal clique in that subgraph together with each vertex in $S$ forms a distinct maximal clique in $G$. Thus, total number of maximal clique in $G$ is $|S|f(d)$ which is $(n-d)f(d)$. This completes the proof. 
\qed \end{proof}

\begin{lemma}\label{degen:2}
For any $v\in V(G)$, $|\cliques_v(G)| \leq (n-d-1)f(d)$ where $d$ is the degeneracy of the graph $G$
\end{lemma}

\begin{proof}
Suppose there is a vertex $v$ such that $v$ is connected to $n-1$ vertices. Also, assume that the degeneracy of $G$ is $d$. Now, the degeneracy of $G-v$ can be at most $d$. Now, the maximum number of maximal cliques in $G-v$ is $(n-d-1)f(d)$. Each of these maximal clique when includes $v$ becomes a maximal clique in $G$. Thus maximum number of maximal cliques containing a vertex is at most $(n-d-1)f(d)$.
\qed \end{proof}

\begin{lemma}\label{degen:3}
For any $e\in E(G)$, the number of maximal cliques containing $e$ is at most $(n-d-2)f(d)$ where $d$ is the degeneracy of $G$.
\end{lemma}

\begin{proof}
Suppose that $e = (u,v)$ and each of $u$ and $v$ is connected to rest of the vertices of $G$. Also the degeneracy of $G-u-v$ can be at most $d$. Now, the maximum number of maximal cliques in $G-u-v$ is $(n-d-2)f(d)$. Each of these maximal cliques, when includes $u$ and $v$, becomes a maximal cliques in $G$. Thus, maximum number of maximal cliques containing $e$ is $(n-d-2)f(d)$.
\qed \end{proof}

\begin{lemma}\label{upper_bound}
For a graph $G$ on $n$ vertices and edge $e\notin E(G)$, the size of $\Lambda(G,G+e)$ can be no larger than $3(n-d-2)f(d)$ where $d$ is the degeneracy of the graph $G$.
\end{lemma}

\begin{proof}
Proof by contradiction. Suppose there exists a graph $G$ and an edge $e\notin E(G)$ such that $|\Lambda(G,G+e)| > 3(n-d-2)f(d)$. Then, either $|\cliques(G+e)\setminus\cliques(G)| > (n-d-2)f(d)$ or $|\cliques(G)\setminus\cliques(G+e)| > 2(n-d-2)f(d)$.
\qed \end{proof}

\begin{proof}
\textbf{Case 1: }$|\cliques(G+e)\setminus\cliques(G)| > (n-d-2)f(d):$ Assume that $e = (u,v)$ and each of $u$ and $v$ are connected to the rest $n-2$ vertices of $G$. Note that the degeneracy of $G-u-v$ is at most $d$. So, maximum number of maximal cliques in $G-u-v$ is $(n-d-2)f(d)$. When we add $e$, each maximal clique in $G-u-v$ together with $u$ and $v$ becomes a maximal clique in $G$ and each of these maximal clique contains $e$. Thus $|\cliques(G+e)\setminus\cliques(G)| \leq (n-d-2)f(d)$. A contradiction.

\textbf{Case2: }$|\cliques(G)\setminus\cliques(G+e)| > 2(n-d-2)f(d):$ Using Lemma~\ref{l1234}(d), each maximal clique $c \in \cliques(G) \setminus \cliques(G+e)$ must contain either $u$ or $v$, but not both. Suppose that $c$ contains $u$ but not $v$. Then $c$ must be maximum clique in $G-v$. Using Lemma~\ref{degen:2} we see that the number of maximal cliques in $G-v$ that contains a specific vertex $u$ can be no more than $(n-d-2)f(d)$; hence the number of possible maximal cliques that contain $u$ is no more than $(n-d-2)f(d)$. In a similar way, the number of possible maximal cliques that contain $v$ is at most $(n-d-2)f(d)$. Therefore, the total number of maximal cliques in $\cliques(G)\setminus\cliques(G+e)$ is at most $2(n-d-2)f(d)$. This is a contradiction.
\qed \end{proof}
}

\section{Enumeration of Change in Set of Maximal Cliques}
\label{sec:csalgo}
In this section we present algorithms for enumerating the change in the set of maximal cliques. In Section~\ref{sec:newc}, we first present an algorithm with provable theoretical properties for enumerating new maximal cliques that arise due to the addition of a batch of edges, followed by an algorithm with good practical performance in Section~\ref{sec:newc-prac}. In Section~\ref{sec:subc}, we present an algorithm for enumerating subsumed cliques due to the addition of new edges. We then consider the decremental case where edges are deleted from the graph in Section~\ref{sec:dec}. 
For graph $G$ and edge set $H$, when the context is clear, we use $\Lambda^{new}$ to mean $\Lambda^{new}(G,G+H)$ and similarly $\Lambda^{del}$ to mean $\Lambda^{del}(G,G+H)$.


\subsection{Enumeration of New Maximal Cliques When Edge Set $H$ is Added}
\label{sec:newc}
When a set of edges $H$ is added to the graph $G$, let $G'$ denote the graph $G+H$. One approach to enumerating new cliques in $G'$ is to simply enumerate all cliques in $G'$ using an output-sensitive algorithm such as~\cite{CN85}, suppress those cliques that were also present in $G$, and output the rest. However, this is clearly not change-sensitive, since it is possible that most cliques in $G'$ will not be finally output. An important point to note here is that most similar approaches, that involve enumerating maximal cliques in a certain graph, followed by suppressing cliques that do not belong to $\Lambda^{new}$, run this risk of sometimes having to suppress most of the cliques that were enumerated, and such approaches will not be change-sensitive. In the following, we present a simple approach, which, at its core, directly outputs cliques from $\Lambda^{new}$, and does not output cliques that do not belong to $\Lambda^{new}$. 

For edge $e \in H$, let $C'(e)$ denote the set of maximal cliques in $G'$ that contain edge $e$. We first present the following observation that $\Lambda^{new}$, the set of all new maximal cliques, is precisely the set of all maximal cliques in $G'$ that contain at least one edge from $H$.

\begin{lemma}
\label{lem:nc1}
\[
\Lambda^{new}(G,G') = \cup_{e \in H} C'(e)
\]
\end{lemma}

\begin{proof}
We first note that each clique in $\Lambda^{new}$ must contain at least one edge from $H$. We use proof by contradiction. Consider a clique $c \in \Lambda^{new}$. If $c$ does not contain an edge from $H$, then $c$ is also a clique in $G$, and hence cannot belong to $\Lambda^{new}$. Hence, $c \in C'(e)$ for some edge $e \in H$, and $c \in  \cup_{e \in H} C'(e)$. This shows that $\Lambda^{new} \subseteq \cup_{e \in H} C'(e)$. Next consider a clique $c \in \cup_{e \in H} C'(e)$. It must be the case that $c \in C'(h)$ for some $h$ in $H$. Thus $c$ is a maximal clique in $G'$. Since $c$ contains edge $h \in H$, $c$ cannot be a clique in $G$. Thus $c \in \Lambda^{new}$. This shows that $\cup_{e \in H} C'(e) \subseteq \Lambda^{new}$.
\qed\end{proof}

We now consider efficient ways of enumerating cliques from $\cup_{e \in H} C'(e)$. For an edge $e \in H$, the enumeration of cliques in $C'(e)$ is reduced to the enumeration of {\em all} maximal cliques in a specific subgraph of $G'$, as follows. Let $u$ and $v$ denote the endpoints of $e$, and let $G'_e$ denote the induced subgraph of $G'$ on the vertex set $\{u,v\}\cup\{\Gamma_{G'}(u) \cap \Gamma_{G'}(v)\}$ i.e. the set of vertices adjacent to both $u$ and $v$ in $G'$, in addition to $u$ and $v$. For example, see Figure~\ref{fig:csnew} for  construction of $G'_e$. 

%

\begin{figure*}
\centering
\begin{tabular}{c}
	\includegraphics[width=.5\textwidth]{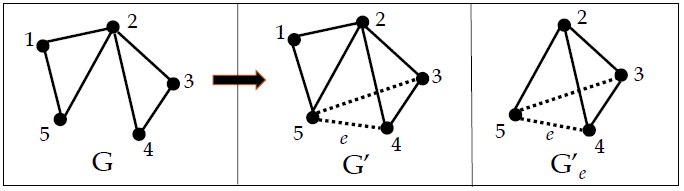}\\
\end{tabular}
\caption{\textbf{Illustration of Lemma~\ref{lem:nc2} that, the set of new maximal cliques in $G'$ containing $e = (4,5)$, i.e. the single clique $\{2,3,4,5\}$, is exactly the set of all maximal cliques in $G'_e$.}} 
\label{fig:csnew}
\end{figure*}
\begin{lemma}
\label{lem:nc2}
\[
For~each~e\in H,~C'(e) = \cliques(G'_e) 
\] 
\end{lemma}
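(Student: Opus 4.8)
The plan is to prove the set equality $C'(e) = \cliques(G'_e)$ by double inclusion, where $e = (u,v)$ and $G'_e$ is the induced subgraph of $G'$ on $\{u,v\} \cup (\Gamma_{G'}(u) \cap \Gamma_{G'}(v))$. The underlying intuition is that any maximal clique of $G'$ containing the edge $e$ must consist of $u$, $v$, and a set of vertices each adjacent to both $u$ and $v$ — i.e., its vertices all live in the vertex set of $G'_e$ — and conversely any maximal clique of the small graph $G'_e$, once we observe it must contain $u$ and $v$, is already maximal in the full graph $G'$.

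For the inclusion $C'(e) \subseteq \cliques(G'_e)$: take $c \in C'(e)$, so $c$ is a maximal clique of $G'$ with $u, v \in c$. Every vertex $w \in c$ is adjacent in $G'$ to both $u$ and $v$ (since $c$ is a clique containing $u$ and $v$), so $w \in \Gamma_{G'}(u) \cap \Gamma_{G'}(v)$ for $w \notin \{u,v\}$; hence $c \subseteq V(G'_e)$. Thus $c$ is a clique in $G'_e$. It remains to check $c$ is \emph{maximal} in $G'_e$: if some vertex $x \in V(G'_e)$ could be added to $c$ keeping it a clique in $G'_e$, then since edges of $G'_e$ are exactly the edges of $G'$ restricted to $V(G'_e)$, the set $c \cup \{x\}$ is also a clique in $G'$, contradicting maximality of $c$ in $G'$. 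So $c \in \cliques(G'_e)$.

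For the reverse inclusion $\cliques(G'_e) \subseteq C'(e)$: take $c \in \cliques(G'_e)$. The key first observation is that $u, v \in c$: since $u$ and $v$ are adjacent in $G'$ (because $e \in H \subseteq E(G')$) and both are adjacent to every vertex of $\Gamma_{G'}(u) \cap \Gamma_{G'}(v)$, the vertices $u$ and $v$ are each connected to all other vertices of $G'_e$; therefore any maximal clique of $G'_e$ must contain both — otherwise it could be extended by adding $u$ or $v$. So $c$ contains edge $e$, and $c$ is a clique of $G'$. Finally I must argue $c$ is maximal in $G'$, not merely in $G'_e$: suppose $c \cup \{x\}$ is a clique in $G'$ for some $x \in V(G') \setminus c$. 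Then $x$ is adjacent in $G'$ to every vertex of $c$, in particular to $u$ and to $v$, so $x \in \Gamma_{G'}(u) \cap \Gamma_{G'}(v) \subseteq V(G'_e)$. Hence $c \cup \{x\}$ is a clique inside $G'_e$, contradicting maximality of $c$ in $G'_e$. Therefore $c$ is maximal in $G'$ and contains $e$, i.e. $c \in C'(e)$.

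The only mildly delicate point — and the step I expect to need the most care — is the observation that a maximal clique of $G'_e$ is automatically forced to contain both endpoints $u$ and $v$; this relies on $u$ and $v$ being universal vertices of $G'_e$, which in turn uses that $e$ itself is an edge of $G'$ (so $u \sim v$) and the definition of the common-neighborhood vertex set. Everything else is a routine consequence of $G'_e$ being an \emph{induced} subgraph, so that adjacency agrees between $G'_e$ and $G'$ on the relevant vertices.
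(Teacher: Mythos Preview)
Your proof is correct and follows essentially the same approach as the paper's own proof: a double inclusion argument in which the forward direction uses that any maximal clique containing $e$ lies inside $V(G'_e)$ and stays maximal there, and the reverse direction uses that $u$ and $v$ are universal in $G'_e$ (forcing any maximal clique of $G'_e$ to contain $e$) together with the observation that any extending vertex in $G'$ would already lie in $V(G'_e)$. If anything, your write-up is slightly more explicit about the role of $G'_e$ being an induced subgraph.
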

\begin{proof}
First we show that $C'(e)\subseteq\cliques(G'_e)$.
Consider a clique $c$ in $C'(e)$, i.e. a maximal clique in $G'=G+H$ containing edge $e$. Hence $c$ must contain both $u$ and $v$. Every vertex in $c$ (other than $u$ and $v$) must be connected to both $u$ and to $v$ in $G'$, and hence must be in $\Gamma_{G'}(u) \cap \Gamma_{G'}(v)$. Hence $c$ must be a clique in $G'_e$. Since $c$ is a maximal clique in $G'$, and $G'_e$ is a subgraph of $G'$, $c$ must also be a maximal clique in $G'_e$. Hence we have that $c \in \cliques(G'_e)$, leading to $C'(e) \subseteq \cliques(G'_e)$.

Next, we show that $\cliques(G'_e) \subseteq C'(e)$. Consider any maximal clique $d$ in $G'_e$. We note the following in $G'_e$: (1)~every vertex in $G'_e$ (other than $u$ and $v$) is connected to $u$ as well as $v$ (2)~$u$ and $v$ are connected to each other. Due to these conditions, $d$ must contain both $u$ and $v$, and hence also edge $e=(u,v)$. Clearly, $d$ is a clique in $G'$ that contains edge $e$. We now show that $d$ is a maximal clique in $G'$. Suppose not, and we could add vertex $v'$ to $d$ and it remained a clique in $G'$. Then, $v'$ must be in $\Gamma_{G'}(u) \cap \Gamma_{G'}(v)$, and hence $v'$ must be in $G'_e$, so that $d$ is not a maximal clique in $G'_e$, which is a contradiction. Hence, it must be that $d$ is a maximal clique in $G'$ that contains edge $e$, and $d \in C'(e)$.
\qed\end{proof}
Following Lemma~\ref{lem:nc2}, in Figure~\ref{fig:csnew}, $\{2,3,4,5\}$ is a new maximal clique in $G'$ that contains $e = (4,5) \in H,  H = \{(3,5), (4,5)\}$. Note that $\{2,3,4,5\}$ is also a maximal clique in $G'_e$.

Our change-sensitive algorithm, $\csnew$ (Algorithm~\ref{algo:newc1}) is based on the above observation, and uses an output-sensitive algorithm \mce, due to~\cite{CN85}, to enumerate all maximal cliques in $G'_e$.
\begin{algorithm}
\DontPrintSemicolon
\caption{$\csnew(G,H$)}
\label{algo:newc1}
\KwIn{$G$ - Input graph, $H$ - Set of $\rho$ edges added to $G$}
\KwOut{All cliques in $\Lambda^{new}$, each clique output once}
Consider edges of $H$ in an arbitrary order $e_1, e_2,\ldots,e_{\rho}$\;
$G' \gets G + H$\;
\For{$i = 1\ldots \rho$}{
	$e \gets e_i = (u,v)$\;
	$V_e \gets \{u,v\}\cup\{\Gamma_{G'}(u)\cap\Gamma_{G'}(v)\}$\;
	$G'_e \gets$ graph induced by $V_e$ on $G'$\;
	Generate cliques using $\mce(G'_e)$. For each clique $c$ thus generated, output $c$ only if $c$ does not contain an edge $e_j$ for $j < i$\;
}
\end{algorithm}

\begin{theorem}
\label{thm:cs-newc}
$\csnew$ enumerates the set of all new cliques arising from the addition of $H$ in time $O(\Delta^3 \rho |\Lambda^{new}|)$ where $\Delta$ is the maximum degree of a vertex in $G'$. The space complexity is $O(|E(G+H)| + |V(G+H)|)$.
\end{theorem}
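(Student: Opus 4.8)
The plan is to establish three things: correctness (every clique in $\Lambda^{new}$ is output exactly once, and nothing else is output), the running time bound, and the space bound. For correctness, I would argue directly from Lemmas \ref{lem:nc1} and \ref{lem:nc2}. By Lemma \ref{lem:nc1}, $\Lambda^{new} = \bigcup_{e \in H} C'(e)$, and by Lemma \ref{lem:nc2}, $C'(e) = \cliques(G'_e)$, so the union of all cliques generated by the calls $\mce(G'_{e_1}), \ldots, \mce(G'_{e_\rho})$ is exactly $\Lambda^{new}$. What remains is to check that the suppression step outputs each clique exactly once: a clique $c \in \Lambda^{new}$ may appear in several sets $C'(e_j)$; the condition ``output $c$ only if $c$ does not contain any $e_j$ with $j < i$'' ensures $c$ is output precisely in the iteration $i$ equal to the smallest index such that $e_i \subseteq c$ (such an index exists since $c$ contains at least one edge of $H$), and suppressed in all later iterations. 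Checking whether $c$ contains some $e_j$, $j<i$, takes $O(\rho)$ time, or $O(\rho |c|)$ more crudely — this is dominated by the enumeration cost below.

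For the running time, I would bound the cost of a single iteration $i$ and then sum over $i = 1, \ldots, \rho$. In iteration $i$, the vertex set $V_e$ has at most $\Delta + 2 = O(\Delta)$ vertices (since $\Gamma_{G'}(u) \cap \Gamma_{G'}(v) \subseteq \Gamma_{G'}(u)$), so $G'_e$ has $O(\Delta)$ vertices and $O(\Delta^2)$ edges. Constructing $G'_e$ costs $O(\Delta^2)$ (or $O(\Delta \cdot \Delta)$ to intersect the two neighbor lists and then read off induced edges). Running $\mce(G'_e)$ via Theorem \ref{thm:mce} costs $O(\alpha \cdot m_e \cdot \mu_e)$ where $m_e = O(\Delta^2)$ is the number of edges in $G'_e$, $\alpha \le \Delta$ is its arboricity, and $\mu_e = |\cliques(G'_e)| = |C'(e)|$ is the number of cliques it enumerates; this is $O(\Delta^3 |C'(e)|)$. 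Since each clique in $C'(e)$ lies in $\Lambda^{new}$ by Lemma \ref{lem:nc2} and Lemma \ref{lem:nc1}, we have $|C'(e)| \le |\Lambda^{new}|$. Thus each iteration costs $O(\Delta^3 |\Lambda^{new}|)$, and summing over the $\rho$ iterations gives the claimed $O(\Delta^3 \rho |\Lambda^{new}|)$. For the space bound, $G'$ itself takes $O(|E(G')| + |V(G')|)$; each $G'_e$ is smaller and can be discarded after its iteration; and by Theorem \ref{thm:mce} the subroutine $\mce$ uses space linear in the size of its input graph, which is $O(\Delta + \Delta^2)$, absorbed into the bound for storing $G'$.

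The main obstacle is making the per-iteration time bound genuinely change-sensitive rather than merely output-sensitive within each subgraph: the subtlety is that $\mce(G'_{e_i})$ enumerates \emph{all} of $C'(e_i)$, including cliques that will be suppressed because they already appeared in an earlier iteration, so the naive total count of enumerated-but-suppressed cliques could in principle exceed $|\Lambda^{new}|$. The resolution, which I would state carefully, is that the bound $|C'(e)| \le |\Lambda^{new}|$ holds for \emph{each} individual edge $e$ (every clique in $C'(e)$ is a genuine new clique), so even though cliques may be re-enumerated across iterations, the work in any one iteration is at most $O(\Delta^3 |\Lambda^{new}|)$, and the factor $\rho$ accounting for re-enumeration across the $\rho$ edges is exactly what appears in the stated bound. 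A secondary minor point to get right is that the arboricity of $G'_e$ is at most $\Delta(G'_e) \le \Delta$, so the $\alpha m_e$ factor in Theorem \ref{thm:mce} is indeed $O(\Delta \cdot \Delta^2) = O(\Delta^3)$, not larger.
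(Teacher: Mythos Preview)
Your proposal is correct and follows essentially the same approach as the paper's proof: correctness via Lemmas~\ref{lem:nc1} and~\ref{lem:nc2} plus the earliest-edge suppression rule, the per-iteration time bound from Theorem~\ref{thm:mce} using $|V(G'_e)| = O(\Delta)$, $m_e = O(\Delta^2)$, $\alpha \le \Delta$, and $\mu_e \le |\Lambda^{new}|$, and the space bound from the linear-space guarantee of $\mce$. Your treatment is in fact slightly more thorough than the paper's, since you explicitly account for the cost of the suppression check and articulate why re-enumeration across iterations is absorbed by the factor $\rho$; the paper leaves these points implicit.
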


\begin{proof}
We first prove the correctness of the algorithm. From Lemmas~\ref{lem:nc1} and~\ref{lem:nc2}, we have that by enumerating $\cliques(G'_e)$ for every $e \in H$, we enumerate $\Lambda^{new}$. Our algorithm does exactly that, and enumerates $\cliques(G'_e)$ using Algorithm $\mce$. Note that each clique $c \in \Lambda^{new}$ is output exactly once though $c$ maybe in $\cliques(G'_e)$ for multiple edges $e \in H$. This is because $c$ is output only for edge $e$ that occurs earliest in the pre-determined ordering of edges in $H$. 

For the runtime, consider that the algorithm iterates over the edges in $H$. In an iteration involving edge $e$, it constructs a graph $G'_e$ and runs $\mce(G'_e)$. Note that the number of vertices in $G'_e$ is no more than $\Delta+1$, and is typically much smaller, since it is the size of the intersection of two vertex neighborhoods in $G'$. Since the arboricity of a graph is less than its maximum degree, $\alpha'\le \Delta$ where $\alpha'$ is the arboricity of $G'_e$. Further, the number of edges in $G'_e$ is {$O(\Delta^2)$}. The set of maximal cliques generated in each iteration is a subset of $\Lambda^{new}$, hence the number of maximal cliques generated from each iteration is no more than $|\Lambda^{new}|$. Applying Theorem~\ref{thm:mce}, we have that the runtime of each iteration is $O(\Delta^3 |\Lambda^{new}|)$. Since there are $\rho$ iterations, the result on runtime follows.

For the space complexity, we note that the algorithm does not store the set of new cliques in memory at any point. The space required to construct $G'_e$ is linear in the size of $G'=(G+H)$, and so is the space requirement of Algorithm $\mce(G'_e)$, from Theorem~\ref{thm:mce}. Hence the total space requirement is linear in the number of edges in $G+H$.
\qed\end{proof}

\subsection{Practical Algorithm for Enumerating New Maximal Cliques}
\label{sec:newc-prac}
The algorithm $\csnew$ uses as a subroutine Algorithm $\mce$ (Chiba and Nishizeki~\cite{CN85}) to enumerate maximal cliques within a subgraph of $G$. While $\mce$ has theoretical properties of being output-sensitive, in practice, it is not the fastest algorithm for maximal clique enumeration. In practice, the most efficient algorithms for maximal clique enumeration in a static graph are based on depth-first search using a technique called ``pivoting", such as the algorithm due to Tomita et al.~\cite{TTT06}.  While the runtime of these algorithms are not provably output-sensitive, they are faster in practice than those algorithms that are provably output-sensitive.

The algorithm due to Tomita et al.~\cite{TTT06}, which we call $\tomita$, is a recursive algorithm based on backtracking for enumerating $\cliques(G)$, given $G$. In its recursive procedure, it maintains a currently found clique, not necessarily maximal, and adds vertices one to the current clique, declaring the current clique to be maximal when no further vertices can be added. The vertices are considered in a carefully chosen order using a method called ``pivoting". $\tomita$ is shown to be worst-case optimal with a runtime of $O(3^{n/3})$ for an $n$ vertex graph. 
It is possible to improve the performance of the $\csnew$ algorithm by directly using $\tomita$ in place of $\mce$. In the following, we show how to do even better. 

\paragraph*{Reducing Redundant Clique Computation: } 
Note that $\csnew(G,H)$ may compute the same clique $c$ multiple times, for example, if $c \in C'(e_1)$ and $c \in C'(e_2)$ for $e_1 \neq e_2$. Duplicates are suppressed prior to emitting the cliques, by outputting $c$ only for one of the edges among $\{e_1,e_2\}$, but the algorithm still pays the computational cost of computing a clique such as $c$ multiple times.  We present Algorithm $\csnewttt$ that eliminates the cost of having to even compute a clique such as $c$ multiple times.

\begin{algorithm}
\DontPrintSemicolon
\caption{$\tomitaE(\mathcal{G},K,\cand,\fini, \mathcal{E})$}
\label{algo:tomitaE}
\KwIn{$\mathcal{G}$ - The input graph, $K$ - a non-maximal clique to extend \\ 
$\cand$ - Set of vertices that may extend $K$,  
$\fini$ - vertices that have been used to extend $K$ \\ 
$\mathcal{E}$ - set of edges to exclude}
\If{$(\cand = \emptyset)$ \& $(\fini = \emptyset)$}{
	Output $K$ and return\;
}
$\pivot \gets (u \in \cand \cup \fini)$ such that $u$ maximizes the size of $\cand  \cap \Gamma_{\mathcal{G}}(u)$\;
$\ext \gets \cand - \Gamma_{\mathcal{G}}(\pivot)$\;
\For{$q \in$ \ext}{
	$K_q \gets K\cup\{q\}$\;
	\If{$K_q \cap \mathcal{E} \neq \emptyset$}{
		$\cand \gets \cand - \{q\}$ ; $\fini \gets \fini\cup\{q\}$\;
		continue\;
	}
	$\cand_q \gets \cand\cap\Gamma_{\mathcal{G}}(q)$ ; $\fini_q \gets \fini\cap\Gamma_{\mathcal{G}}(q)$\;
	$\tomitaE(\mathcal{G},K_q,\cand_q,\fini_q,\mathcal{E})$\;
	$\cand \gets \cand-\{q\}$ ; $\fini \gets \fini \cup\{q\}$\;
}
\end{algorithm}

Algorithm $\csnewttt$ uses as a subroutine Algorithm $\tomitaE$, an extension of the $\tomita$ algorithm, which enumerates all maximal cliques of an input graph that avoid a given set of edges. While $\tomita$ simply takes a graph as input and enumerates all maximal cliques within the graph, $\tomitaE$ takes an additional input, a set of edges $\mathcal{E}$, and only enumerates those cliques within the graph that do not contain any edge from $\mathcal{E}$. We present the recursive version of $\tomitaE$, which takes as input five parameters -- an input graph $G$, three sets of vertices $K$, $\cand$, and $\fini$, and a set of edges $\mathcal{E}$. The algorithm outputs every maximal clique in $G$ that (a)~contain all vertices in $K$, (b)~zero or more vertices in $\cand$, (c)~none of the vertices in $\fini$, and (d)~none of the edges in $\mathcal{E}$.  

A description of $\tomitaE$ is presented in Algorithm~\ref{algo:tomitaE}. We note that this algorithm follows the structure of the recursion in the $\tomita$ algorithm, and incorporates additional pruning of search paths, by avoiding paths that contain an edge from $\mathcal{E}$. In particular, in Line~7 of $\tomitaE$, if the clique $K_q$ (formed after adding vertex $q$ to $K$) contains an edge from $\mathcal{E}$, then the rest of the search path, which will continue adding more vertices, is not explored further. Instead the algorithm backtracks and tries to extend the clique $K$ by adding other vertices. 


\begin{figure*}
\centering
\begin{tabular}{c}
	\includegraphics[width=.8\textwidth]{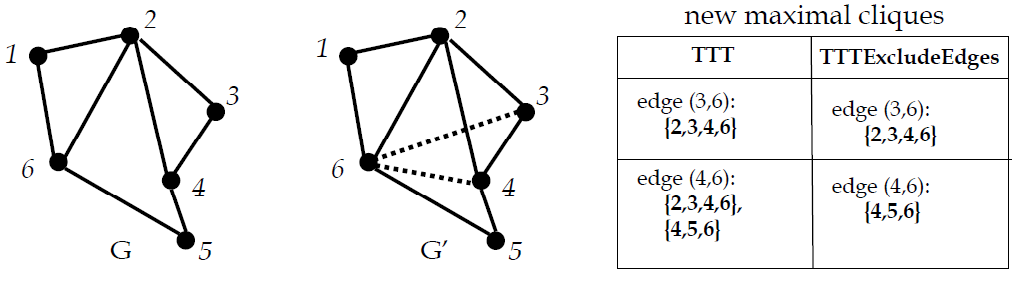}\\
\end{tabular}
\caption{\textbf{Enumeration of new maximal cliques when graph changes from $G$ to $G'$ due to addition of new edges $(3,6)$ and $(4,6)$. Consider the ordering of edges as $(3,6)$ followed by $(4,6)$. There are two new maximal cliques containing edge $(4,6)$, which are $\{4,5,6\}$ and $\{2,3,4,6\}$. With $\tomitaE$,  only $\{4,5,6\}$ is enumerated when considering edge $(4,6)$, since $\{2,3,4,6\}$ has already been enumerated while considering edge $(3,6)$.}}
\label{fig:tttext}
\end{figure*}

\begin{algorithm}[t]
\DontPrintSemicolon
\caption{$\csnewttt(G,H)$}
\label{algo:newc-ttt}
\KwIn{$G$ - input graph \\ \hspace{1cm} $H$ - Set of $\rho$ edges being added to $G$}
\KwOut{Cliques in $\Lambda^{new} = \cliques(G+H)\setminus\cliques(G)$}
$G' \gets G + H$ ; $\mathcal{E}\gets\phi$\;
Consider edges of $H$ in an arbitrary order $e_1, e_2,\ldots,e_{\rho}$\;

\For{$i \gets 1,2, \ldots,\rho$}{
	$e \gets e_i = (u,v)$\;
	
	$V_e \gets \{u,v\}\cup\{\Gamma_{G'}(u)\cap\Gamma_{G'}(v)\}$\;
	$\mathcal{G} \gets$ Graph induced by $V_e$ on $G'$\;
	$K \gets \{u,v\}$\;
	$\cand \gets V_e \setminus \{u,v\}$ ; $\fini \gets \emptyset$\;
	$S\gets\tomitaE(\mathcal{G}, K, \cand, \fini, \mathcal{E})$\;
	$\Lambda^{new}\gets\Lambda^{new}\cup S$\;
	$\mathcal{E}\gets\mathcal{E}\cup e_i$\;
}
\end{algorithm}

Our algorithm for enumerating new maximal cliques $\csnewttt$ (Algorithm~\ref{algo:newc-ttt}) is an adaptation of change-sensitive algorithm $\csnew$ (Algorithm~\ref{algo:newc1}) where we use $\tomitaE$ instead of the output-sensitive $\mce$. In particular, while enumerating all new cliques containing edge $e_i$, $\csnewttt$ enumerates only those cliques that exclude edges $\{e_1,e_2\ldots,e_{i-1}\}$. Note that in $\csnewttt$, there is no further duplicate suppression required, since the call to $\tomitaE$ does not return any cliques that contain an edge from $\mathcal{E}$. This is more efficient than first enumerating duplicate cliques, followed by suppressing duplicates before emitting them. This idea makes $\csnewttt$ more efficient in practice than $\csnew$.


The correctness of $\csnewttt$ follows in a similar fashion to that of Algorithm~$\csnew$ proved in Theorem~\ref{thm:cs-newc}, except that we also need a proof of the guarantee provided by Algorithm~$\tomitaE$, which we prove below.
\begin{lemma}
\label{lem:tomitaE}
$\tomitaE(\mathcal{G},K,\cand,\fini, \mathcal{E})$ (Algorithm~\ref{algo:tomitaE}) returns all maximal cliques $c$ in $\mathcal{G}$ such that (1)~$c$ contains all vertices from $K$, (2)~remaining vertices in $c$ are chosen from $\cand$, (3)~$c$ contains no vertex from $\fini$ and (4)~$c$ does not contain any edges in $\mathcal{E}$.
\end{lemma}

\begin{proof}
We note that $\tomitaE$ matches the original $\tomita$ algorithm, except for lines $7$ to $9$. Hence, if we do not consider lines $7$ to $9$ in $\tomitaE$, the algorithm becomes $\tomita$, and by the correctness of $\tomita$ (Theorem 1~\cite{TTT06}), all maximal cliques $c$ in $\mathcal{G}$ are returned. 

Now consider lines $7$ to $9$ in $\tomitaE$. Clearly, (1), (2), (3) are preserved for each maximal clique $c$ generated by $\tomitaE$. Now to complete the correctness proof of $\tomitaE$, along with proving (4), we also need to prove that each maximal clique $c$ in $\mathcal{G}$ that does not contain any edge in $\mathcal{E}$ is generated by $\tomitaE$. 

Assume there exists a maximal clique $c$ in $\mathcal{G}$, which contains an edge in $\mathcal{E}$, which is output by the $\tomitaE$ algorithm, assume the offending edge is $e = (q,v)$. Suppose that vertex $v$ was added to our expanding clique first. Then, as $q$ is processed, line $7$ of the algorithm will return back true as $e \in K_q$ and $\mathcal{E}$, thus $q$ will not be added to the clique, and $c$ will not be reported as maximal, a contradiction.

Now we will show that, if a maximal clique does not contain an edge from $\mathcal{E}$, the clique will be generated. Consider a maximal clique $c$ in $\mathcal{G}$ that contains no edge from $\mathcal{E}$ but $c$ is not generated by $\tomitaE$. The only reason for $c$ not being generated is the inclusion of lines $7$ to $9$ (of $\tomitaE$) to $\tomita$ resulting in $\tomitaE$, because, otherwise, $c$ would be generated due to correctness of $\tomita$. So in $\tomitaE$, during the expansion of $K$ towards $c$, there exists a vertex $q \in c$ such that line $7$ in $\tomitaE$ is satisfied and $c$ never gets a chance to be generated as $q$ is excluded from $\cand$ and included in $\fini$ (line $8$). This implies that $c$ contains at least an edge in $\mathcal{E}$, because otherwise, condition at line $7$ would never be satisfied. This is a contradiction. Hence,  $c$ must be generated. This completes the proof.
\qed\end{proof}

\subsection{Enumeration of Subsumed Maximal Cliques}
\label{sec:subc}
We now consider the enumeration of subsumed cliques, i.e. the set $\cliques(G) \setminus \cliques(G+H)$. A subsumed clique $c'$ still exists in $G'=G+H$, but is now a part of a larger clique in $G'$. Such a larger clique must be a part of $\Lambda^{new}$. Thus, an algorithm idea is to check each new clique $c$ in $\Lambda^{new}$ to see if $c$ subsumed any maximal clique $c'$ in $G$. In order to see which maximal cliques $c'$ may have subsumed, we note that any maximal clique subsumed by $c$ must also be a maximal clique within  subgraph $c-H$. Thus, one approach is to enumerate all maximal cliques in $c-H$ and for each such generated clique $c'$, we check whether $c'$ is maximal in $G$ by verifying maximality of $c'$ in $G$. This algorithm can be implemented in space proportional to the size of $G+H$, since it can directly use an algorithm for maximal clique enumeration such as $\mce$.

However, in practice, checking each potential clique for maximality is a costly operation since it potentially needs to consider the neighborhood of every vertex of the clique. An alternative approach to avoid this costly maximality check is to store the set of maximal cliques $\cliques(G)$ and check if $c'$ is in $\cliques(G)$. The downside of this approach is that the space required to store the clique set can be high. 

Hence, we considered another approach to subsumed cliques, where we reduce the memory cost by storing signatures of maximal cliques as opposed to the cliques themselves. The signature is computed by representing a clique in a canonical fashion (for instance, representing the clique as a list of vertices sorted by their ids.) as a string followed by computing a hash of this string. By storing only the signatures and not the cliques themselves, we are able to check if a clique is a current maximal clique, and at the same time, pay far lesser cost in memory when compared with storing the clique itself. The algorithm is described in Algorithm~\ref{algo:cs-sub1}, With this approach of storing signatures instead of storing the cliques themselves, there is a (small) chance of collision of signatures, which means for two different cliques $C_1$ and $C_2$ the signature might be the same. This might result in false positives meaning that some cliques might wrongly be concluded as subsumed cliques. However, the probability of the event that the hash values of two different cliques are same is extremely low with the use of a hashing algorithm such as 64-bit murmur hash~\footnote{\url{https://sites.google.com/site/murmurhash/}}. In our experiments, we observed that the set of subsumed cliques reported with the use of signature is always the same as the actual set of subsumed cliques. If it is extremely important to avoid false positives, we can explicitly check a potential subsumed clique for maximality in the original graph.

In Algorithm~\ref{algo:cs-sub1}, Lines $4$ to $12$ describes the procedure for computing $S$, the set of all maximal cliques in $c - H$ and Lines $13$ to $15$ decide which among the maximal cliques in $S$ are subsumed. For computing maximal cliques in $c - H$, we only consider the edges in $H$ that are present in $c$ as we can see in Line $4$. We prove that $S$ is the set of all maximal cliques in $c - H$ in the following lemma:
\begin{lemma}
\label{lem:cliques_in_cminusH}
In Algorithm~\ref{algo:cs-sub1}, for each $c \in \Lambda^{new}$, $S$ contains all maximal cliques in $c - H$.  
\end{lemma}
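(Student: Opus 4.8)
The plan is to unwind the definition of the set $S$ as it is built up in the loop of Algorithm~\ref{algo:cs-sub1} and argue by double inclusion. Since the algorithm only uses the edges of $H$ that are actually present in $c$ (Line~4), write $H_c = H \cap \binom{c}{2}$ for that set; clearly $c - H = c - H_c$, so it suffices to show that $S$ equals the set of maximal cliques of the graph $c - H_c$. The key structural observation is that $c$, being a clique in $G'$, induces a complete graph on its vertices, so $c - H_c$ is exactly a complete graph on vertex set $c$ with the edge set $H_c$ removed. Its maximal cliques are therefore determined entirely by which non-edges (elements of $H_c$) must be avoided; this is a small, self-contained combinatorial object on at most $|c|$ vertices.

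First I would show $S \subseteq \cliques(c - H)$: every clique that the algorithm places into $S$ is, by construction, a clique of $c - H_c$ (it contains none of the forbidden edges of $H_c$), and the algorithm's enumeration step (whichever maximal-clique subroutine is invoked on $c - H$, e.g.\ \mce\ or $\tomitaE$ with $\mathcal{E} = H_c$) only emits cliques that are maximal in that subgraph; hence each element of $S$ is a maximal clique of $c - H$. For the reverse inclusion $\cliques(c-H) \subseteq S$, I would appeal to the correctness of the enumeration subroutine: by Theorem~\ref{thm:mce} (Chiba--Nishizeki) the algorithm \mce\ enumerates \emph{all} maximal cliques of its input graph, and if instead $\tomitaE$ is used, Lemma~\ref{lem:tomitaE} guarantees it returns every maximal clique of $\mathcal{G} = c$ that avoids all edges of $\mathcal{E} = H_c$ — which is precisely every maximal clique of $c - H_c = c - H$. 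Either way, no maximal clique of $c - H$ is missed, so $\cliques(c - H) \subseteq S$.

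Combining the two inclusions gives $S = \cliques(c - H)$, which is the claim. The only real subtlety — and the step I expect to be the main obstacle to state cleanly rather than to prove — is making precise that restricting $H$ to $H_c$ in Line~4 does not change the set of maximal cliques being enumerated (i.e.\ that an edge of $H$ with an endpoint outside $c$ is irrelevant when we work inside the induced subgraph on $c$), and correspondingly that the subgraph fed to the enumeration routine really is the complete graph on $c$ minus $H_c$. Once that bookkeeping is pinned down, the lemma reduces immediately to the already-established correctness of the underlying maximal-clique enumeration subroutine.
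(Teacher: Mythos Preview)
Your argument rests on a misreading of Algorithm~\ref{algo:cs-sub1}. Lines~4--12 do \emph{not} call \mce\ or $\tomitaE$ on $c-H$; they implement an explicit iterative procedure that starts from $S=\{c\}$ and, for each edge $e=(u,v)\in E(c)\cap H$ in turn, replaces every $c'\in S$ containing $e$ by the pair $c'\setminus\{u\}$ and $c'\setminus\{v\}$. So appealing to Theorem~\ref{thm:mce} or Lemma~\ref{lem:tomitaE} proves nothing about this algorithm; those results concern different subroutines that are never invoked here.

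Relatedly, your double-inclusion plan aims at $S=\cliques(c-H)$, but the lemma only asserts the inclusion $\cliques(c-H)\subseteq S$, and equality is in fact false. For instance, take $c=\{1,2,3\}$ with $H_c=\{(1,2),(1,3)\}$: the algorithm produces $S=\{\{2,3\},\{3\},\{1\}\}$, which properly contains the non-maximal clique $\{3\}$. What you need is an inductive argument on the number of edges of $H_c$ already processed. The base case ($k=1$) is immediate: removing a single edge $(u,v)$ from a clique $c$ gives exactly the two maximal cliques $c\setminus\{u\}$ and $c\setminus\{v\}$. For the inductive step, let $H_1=\{e_1,\dots,e_k\}$ and $H_1'=H_1\cup\{e_{k+1}\}$, and show that any maximal clique $d$ of $c-H_1'$ is contained in some maximal clique $c'$ of $c-H_1$; then either $e_{k+1}\notin E(c')$ and maximality forces $d=c'$, or $e_{k+1}\in E(c')$ and maximality forces $d\in\{c'\setminus\{u\},\,c'\setminus\{v\}\}$. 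Since by induction $c'\in S$ before processing $e_{k+1}$, the splitting step places $d$ into the new $S$. This is the argument that matches what the algorithm actually does.
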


\begin{proof}
First note that, we only consider the set of all edges $H_1\subseteq H$ which are present in $c$ (line $4$). Clearly computing maximal cliques in $c - H$ reduces to computing maximal cliques in $c - H_1$. We prove this using induction on $k$, the number of edges in $H_1$. Suppose $k = 1$ so that $H_1$ is a single edge, say $e_1 = \{u,v\}$. Then clearly, $c - H_1$ has two maximal cliques, $c \setminus \{u\}$ and $c \setminus \{v\}$, proving the base case. Suppose that for any set $H_1$ of size $k$, it is true that all maximal cliques in $c - H_1$ have been generated using induction hypothesis. Consider a set $H'_1 = \{e_1,e_2,...,e_{k+1}\}$ with $(k+1)$ edges. Now each maximal clique $c'$ in $c - H_1$ either remains a maximal clique within $c - H'_1$ (if at least one endpoint of $e_{k+1}$ is not in $c'$), or leads to two maximal cliques in $c - H'_1$ (if both endpoints of $e_{k+1}$ are in $c'$). Thus lines $4$ to $12$ in Algorithm~\ref{algo:cs-sub1} generates all maximal cliques in $c - H$.
\qed\end{proof}

\remove{
\begin{algorithm}
\DontPrintSemicolon
\caption{$\cssub_1(G,H,\Lambda^{new})$}
\label{algo:cs-sub1}
\KwIn{$G$ - Input Graph \\ \hspace{1cm} $H$ - Edge set being added to $G$ \\ \hspace{1cm} $\Lambda^{new}$ - set of new maximal cliques in \st{$G\cup H$} {\color{blue}$G + H$}}
\KwOut{All cliques in $\Lambda^{del} = \cliques(G) - \cliques(G+H)$}
\For{$c \in \Lambda^{new}$}{
	$c^{-H} \gets c\setminus H$\;
	{\color{red}$C_{check} \gets \tomita(c^{-H})$}\;
	\For{$c' \in C_{check}$}{
		\If{$c'$ is maximal in $G$}{
			Output $c'$ into $\Lambda^{del}$\;
		}
	}
}
\end{algorithm}

\begin{algorithm}[t]
\DontPrintSemicolon
\caption{$\cssub_1(G,H,C,\Lambda^{new})$}
\label{algo:cs-sub1}
\KwIn{$G$ - Input Graph \\ \hspace{1cm} $H$ - Edge set being added to $G$ \\ \hspace{1cm} $C$ - Set of maximal cliques in $G$ \\ \hspace{1cm} $\Lambda^{new}$ - set of new maximal cliques in $G\cup H$}
\KwOut{All cliques in $\Lambda^{del} = \cliques(G) - \cliques(G+H)$}
$\Lambda^{del} \gets \emptyset$\;
\For{$c \in \Lambda^{new}$}{
	$c^{-H} \gets c\setminus H$\;
	\st{$C_{check} \gets \tomita(c^{-H}, K=\emptyset, \cand=c^{-H}, \fini=\emptyset)$}\;
	{\color{blue}$C_{check} \gets \mce(c^{-H})$}\;
	\For{$c' \in C_{check}$}{
		\If{$c' \in C$}{
			$\Lambda^{del} \gets \Lambda^{del} \cup c'$\;
		}
	}
}
\end{algorithm}
}

\begin{algorithm}[t]
\DontPrintSemicolon
\caption{$\cssub(G,H,C,\Lambda^{new})$}
\label{algo:cs-sub1}
\KwIn{$G$ - Input Graph \\ \hspace{1cm} $H$ - Edge set being added to $G$ \\ \hspace{1cm} $C$ - Set of maximal cliques in $G$ \\ \hspace{1cm} $\Lambda^{new}$ - set of new maximal cliques in $G + H$}
\KwOut{All cliques in $\Lambda^{del} = \cliques(G) \setminus \cliques(G+H)$}
$\Lambda^{del} \gets \emptyset$\;
\For{$c \in \Lambda^{new}$}{
		$S\gets \{c\}$\;
		\For{$e = (u,v) \in E(c) \cap H$}{
			$S'\gets \phi$\;
			\For{$c' \in S$}{
				\If{$e \in E(c')$}{
					$c_1 = c'\setminus\{u\}$ ; $c_2 = c'\setminus\{v\}$\;
					$S' \gets S'\cup c_1$ ; $S' \gets S'\cup c_2$\;
					
				}
				\Else{
					
					$S' \gets S'\cup c'$\;
				}
			}
			$S\gets S'$\;
		}
%
	\For{$c' \in S$}{
		\If{$c' \in C$}{
			$\Lambda^{del} \gets \Lambda^{del} \cup c'$\;
			$C \gets C\setminus c'$\;
		}
	}
}
\end{algorithm}

We show that the above is a change-sensitive algorithm for enumerating $\Lambda^{del}$ in the case when the number of edges $\rho$ in $H$ is a constant.
\begin{lemma}
\label{thm:cssub}
Algorithm $\cssub$ (Algorithm~\ref{algo:cs-sub1}) enumerates all cliques in $\Lambda^{del} = \cliques(G) \setminus \cliques(G')$ using time $O(2^{\rho} |\Lambda^{new}|)$. The space complexity of the algorithm is $O(|E(G')| + |V(G')| + |\cliques(G)|)$. The algorithm can also be adapted to run in time $O(2^{\rho} |E(G)| |\Lambda^{new}|)$, and space $O(|E(G')| + |V(G')|$. 
\end{lemma}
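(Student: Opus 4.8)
The proof naturally splits into three parts: correctness, running time, and space complexity. For correctness, the goal is to show that $\cssub$ outputs exactly the cliques in $\Lambda^{del} = \cliques(G) \setminus \cliques(G')$. The key structural fact — which I would prove first, or isolate as a preliminary observation — is that every subsumed clique $c' \in \Lambda^{del}$ is a maximal clique of $c - H$ for some $c \in \Lambda^{new}$. Indeed, if $c'$ was maximal in $G$ but not in $G'$, then $c'$ is contained in some maximal clique $c$ of $G'$ that strictly contains it; since $c \supsetneq c'$ and $c'$ is already maximal in $G$, the larger clique $c$ must use at least one edge of $H$, so $c \in \Lambda^{new}$; and $c'$, being a clique inside $c$ that avoids all edges of $H$ (it survives in $G$), must be contained in a maximal clique of $c - H$ — in fact it \emph{is} maximal in $c - H$, since any extension of $c'$ within $c-H$ would be an extension of $c'$ within $G$, contradicting maximality of $c'$ in $G$. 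Combining this with Lemma~\ref{lem:cliques_in_cminusH} (which guarantees $S$ enumerates all maximal cliques of $c-H$) and the membership test $c' \in C$ on line~14, we get that the algorithm outputs precisely those maximal cliques of the various $c - H$ that are also maximal in $G$, which is exactly $\Lambda^{del}$. One should also note no clique is output twice in a harmful way — or rather, observe that even if a clique $c'$ is generated from two different parents $c$, it is still only a correctness concern if we care about uniqueness; here the statement only claims enumeration, so duplicates (if any) are at worst a constant-factor cost subsumed in the analysis, though I would remark that storing signatures lets one suppress them.

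For the running time, I would argue as follows. The outer loop runs $|\Lambda^{new}|$ times. For a fixed $c \in \Lambda^{new}$, the inner construction of $S$ processes the edges of $E(c) \cap H$, of which there are at most $\rho$. Starting from $S = \{c\}$, each edge can at most double the size of $S$, so after processing all of them $|S| \le 2^{\rho}$, and the total work to build $S$ — iterating over the current $S$ for each of the $\le \rho$ edges, doing $O(1)$-ish set operations per clique (or $O(|c|)$, which I would fold into the stated bound by treating clique size as bounded, or absorb as a lower-order factor) — is $O(2^{\rho} \cdot \rho)$, hence $O(2^\rho)$ treating $\rho$ as constant. Then lines~13--15 iterate over $S$, performing a membership check $c' \in C$; using the hashed-signature representation described in the text, each such check is $O(1)$ expected (or $O(\text{clique size})$ to compute the signature). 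So the per-$c$ cost is $O(2^\rho)$ and the total is $O(2^\rho |\Lambda^{new}|)$. For space: we store $C$ (or its signatures), costing $O(|\cliques(G)|)$; we store the graph $G'$, costing $O(|E(G')| + |V(G')|)$; and the working set $S$ for the current $c$ is of size $O(2^\rho)$, a constant, hence negligible. Summing gives the claimed $O(|E(G')| + |V(G')| + |\cliques(G)|)$.

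The main obstacle — or rather the point requiring the most care — is the correctness direction: making airtight the claim that \emph{every} subsumed clique arises as a maximal clique of $c - H$ for some \emph{new} maximal clique $c$, and conversely that the membership test $c' \in C$ correctly filters out those maximal cliques of $c-H$ that happen not to be maximal in $G$ (a clique maximal in $c - H$ need not be maximal in $G$, since $G$ has many vertices outside $c$). The forward inclusion uses the argument above; for the filter, the test $c' \in C$ is exactly right because $C = \cliques(G)$ by hypothesis. A secondary subtlety worth flagging is whether the coarse ``each edge doubles $|S|$'' bound is tight enough — it is, for the stated $O(2^\rho)$ — and whether the $O(1)$ treatment of per-clique set operations and signature hashing is acceptable; I would state explicitly that $\rho$ and clique sizes are treated as constants (or carry an extra polynomial factor in the bound) so that the headline bound $O(2^\rho |\Lambda^{new}|)$ is clean.
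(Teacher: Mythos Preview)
Your proposal is correct and follows essentially the same approach as the paper: both directions of correctness via the observation that every subsumed clique is a maximal clique of $c-H$ for some $c\in\Lambda^{new}$ (combined with Lemma~\ref{lem:cliques_in_cminusH} and the membership test against $C$), the $2^\rho$ bound on $|S|$ via the edge-by-edge doubling/inductive argument, and the space accounting into graph storage plus $|\cliques(G)|$ for signatures. Your write-up is in fact slightly more explicit than the paper's on why the containing clique $c$ must lie in $\Lambda^{new}$ and why $c'$ is maximal in $c-H$; the only point the paper adds that you omit is the remark that $\Lambda^{new}$ can be consumed as a stream rather than stored, which keeps its contribution to the space bound negligible.
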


\begin{proof}
We first show that every clique $c'$ enumerated by the algorithm is indeed a clique in $\Lambda^{del}$. To see this, note that $c'$ must be a maximal clique in $G$, due to explicitly checking the condition. Further, $c'$ is not a maximal clique in $G'$, since it is a proper subgraph of $c$, a maximal clique in $G'$. Next, we show that all cliques in $\Lambda^{del}$ are enumerated. Consider any subsumed clique $c_1' \in \Lambda^{del}$. It must be contained within $c_1-H$, where $c_1 \in \Lambda^{new}$. Moreover, $c_1'$ will be a maximal clique within $c_1-H$, and will be enumerated by the algorithm according to Lemma~\ref{lem:cliques_in_cminusH}.

For the time complexity we show that for any $c \in \Lambda^{new}$, the maximum number of maximal cliques in $c^{-H} = c-H$ is $2^{\rho}$. Proof is by induction on $\rho$. Suppose $\rho=1$ so that $H$ is a single edge, say $e_1=\{u,v\}$. Then clearly $c^{-H}$ has two maximal cliques, $c \setminus \{u\}$ and $c \setminus \{v\}$, proving the base case. Suppose that for any set $H$ of size $k$, it was true that $c^{-H}$ has no more than $2^k$ maximal cliques. Consider a set $H'' = \{e_1,e_2,\ldots,e_{k+1}\}$ with $(k+1)$ edges. Let $H' = \{e_1,e_2,\ldots,e_k\}$. Subgraph $c-H''$ is obtained from $c-H'$ by deleting a single edge $e_{k+1}$. By induction, we have that $c-H'$ has no more than $2^k$ maximal cliques. Each maximal clique $c'$ in $c-H'$ either remains a maximal clique within $c-H''$ (if at least one endpoint of $e_{k+1}$ is not in $c'$) , or leads to two maximal cliques in $c-H''$ (if both endpoints of $e_{k+1}$ are in $c'$). Hence, the number of maximal cliques in $c-H''$ is no more than $2^{k+1}$, completing the inductive step. 

\remove{Though the algorithm is written to enumerate all possible maximal cliques in $c \setminus H$, we can equivalently try all the $2^{\rho}$ potential maximal cliques in $c \setminus H$ and check for maximality in $G$. We take this approach in deriving bounds on the runtime of the algorithm.} Thus, for each cliques $c \in \Lambda^{new}$, we need to check maximality for no more than $2^{\rho}$ cliques in $G$. Note that a clique $c'$ is maximal in $G$ if it is contained in $\cliques(G)$, the set of maximal cliques in $G$. This can be done in constant time by storing the signatures of maximal cliques and checking if the signature of $c'$ is in the set of signatures of maximal cliques of $G$.


For the space bound, we first note that all operations in Algorithm~\ref{algo:cs-sub1} except maximality check \remove{including computing $c^{-H}$, enumerating maximal cliques within it, and checking for maximality,} can be done in space linear in the size of $G'$. For maximality check we need space $O(|\cliques(G)|)$ as we need to store the (signatures of) maximal cliques of $G$. The only remaining space cost is the size of $\Lambda^{new}$, which can be large. Note that the algorithm only iterates through $\Lambda^{new}$ in a single pass. If elements of $\Lambda^{new}$ were provided as a stream from the output of an algorithm such as $\csnew$, then they do not need to be stored within a container, so that the memory cost of receiving $\Lambda^{new}$ is reduced to the cost of storing a single maximal clique within $\Lambda^{new}$ at a time.

An alternative algorithm does not store $\cliques(G)$ (or hashes of elements in $\cliques(G)$). Instead, each time a potential subsumed clique $c'$ is generated that is contained in a new clique $c \in \Lambda^{new}$, we simply check $c'$ for maximality in $G$. This can be done in time $O(|E(G)|)$, by checking the intersections of the different vertex neighborhoods -- typical runtime for maximality checking can be much smaller.
\qed
\end{proof}

\remove{
\subsection{Practical Algorithms for Enumerating New and Subsumed Cliques}
In practice, the most efficient algorithms for enumerating maximal cliques (in a static graph) are based on depth-first search using a technique called ``pivoting", such as the algorithm due to Tomita et al.~\cite{TTT06}.  While the runtime of algorithms in this class is not provably change-sensitive, these algorithms are typically faster in practice than those algorithms that are provably change-sensitive.

\subsubsection{Enumerating New Cliques}
For enumerating new cliques, we show how to use our ideas above in conjunction with the algorithm of Tomita et al.~\cite{TTT06} to yield an algorithm that is very efficient in practice. In this algorithm, we provide an additional enhancement when compared with $\csnew$. We note that $\csnew(G,H)$ may compute the same clique $c$ multiple times, if $c \in C'(e_1)$ and $c \in C'(e_2)$ for $e_1 \neq e_2$. Though duplicates are suppressed prior to emitting the cliques -- in this case, $c$ is output only for one of the edges among $\{e_1,e_2\}$, the algorithm still pays the computational cost of computing a clique such as $c$ multiple times.  Our new algorithm that we present, eliminates this computational burden, leading to a significant improvement.

Our algorithm is based on a modification of the algorithm of Tomita et al., that prevents the search of maximal cliques that contain a given set of edges. Let $\tomita$ denote the original algorithm due to~\cite{TTT06}, which enumerates all maximal cliques within a graph. We first present a modification of $\tomita$, which we call $\tomitaE$, shown in Algorithm~\ref{algo:tomitaE}. When compared with $\tomita$, $\tomitaE$ takes an extra parameter, a set of edges $\mathcal{E}$, and imposes the restriction that that no clique will be enumerated that contains an edge in $\mathcal{E}$. Essentially, in $\tomitaE$, the process of extending cliques stops as soon as the current clique has an edge from $\mathcal{E}$.

\begin{figure}[t!]
\begin{tabular}{c}
	\includegraphics[width=.45\textwidth]{pic3.png}\\
\end{tabular}
\caption{Enumeration of new maximal cliques when graph changes from $G_1$ to $G_2$ due to addition of new edges $(3,6)$ and $(4,6)$. Consider the ordering of edges as $(3,6)$ followed by $(4,6)$. Observe that, using $\tomitaE$, only $\{4,5,6\}$ is enumerated considering edge $(4,6)$, as although it is possible to generate clique $\{2,3,4,6\}$ considering edge $(4,6)$, edge $(3,6)$ contained in this clique has already been considered}
\label{fig:tttext}
\end{figure}

\begin{algorithm}
\DontPrintSemicolon
\caption{$\tomitaE(\mathcal{G},K,\cand,\fini, \mathcal{E})$}
\label{algo:tomitaE}
\KwIn{$\mathcal{G}$ - The input graph \\ \hspace{1cm} $K$ - a non-maximal clique to extend \\ 
$\cand$ - Set of vertices that may extend $K$ \\ 
$\fini$ - vertices that have been used to extend $K$ \\ 
$\mathcal{E}$ - set of edges to ignore}
\If{$(\cand = \emptyset)$ \& $(\fini = \emptyset)$}{
	Output $K$ and return\;
}
$\pivot \gets (u \in \cand \cup \fini)$ such that $u$ maximizes the size of the intersection of $\cand  \cap \Gamma_{\mathcal{G}}(u)$\;
$\ext \gets \cand - \Gamma_{\mathcal{G}}(\pivot)$\;
\For{$q \in$ \ext}{
	$K_q \gets K\cup\{q\}$\;
	\If{$K_q \cap \mathcal{E} \neq \emptyset$}{
		$\cand \gets \cand - \{q\}$\;
		$\fini \gets \fini\cup\{q\}$\;
		continue\;
	}
	$\cand_q \gets \cand\cap\Gamma_{\mathcal{G}}(q)$\;
	$\fini_q \gets \fini\cap\Gamma_{\mathcal{G}}(q)$\;
	$\tomitaE(\mathcal{G},K_q,\cand_q,\fini_q,\mathcal{E})$\;
	$\cand \gets \cand-\{q\}$\;
	$\fini \gets \fini \cup\{q\}$\;
}
\end{algorithm}

\begin{algorithm}[t]
\DontPrintSemicolon
\caption{$\csnewttt(G,H)$}
\label{algo:newc-ttt}
\KwIn{$G$ - input graph \\ \hspace{1cm} $H$ - Set of $m$ edges being added to $G$}
\KwOut{Cliques in $\Lambda^{new} = (\cliques(G+H)-\cliques(G))$}
\st{$G' \gets G \cup H$}\;
{\color{blue}$G' \gets G + H$}\;
{\color{blue}$\mathcal{E}\gets\phi$}\;
Consider edges of $H$ in an arbitrary order $e_1, e_2,\ldots,e_m$\;

\For{$i \gets 1,2, \ldots,m$}{
	$e \gets e_i = (u,v)$\;
	
	$V_e \gets \{u,v\}\cup\{\Gamma_{G'}(u)\cap\Gamma_{G'}(v)\}$\;
	\st{$G'(V_e) \gets$ Graph induced by $V_e$ on $G'$}\;
	{\color{blue}$\mathcal{G} \gets$ Graph induced by $V_e$ on $G'$}\;
	$K \gets \{u,v\}$\;
	$\cand \gets V_e \setminus \{u,v\}$\;
	$\fini \gets \emptyset$\;
	\st{Output $\tomitaE(G'(V_e), K, \cand, \fini, \{e_1,\ldots,e_{i-1}\})$}\;
	{\color{blue}$S\gets\tomitaE(\mathcal{G}, K, \cand, \fini, \mathcal{E})$}\;
	{\color{blue}$\Lambda^{new}\gets\Lambda^{new}\cup S$}\;
	{\color{blue}$\mathcal{E}\gets\mathcal{E}\cup e_i$}\;
}
\end{algorithm}

{\color{blue}$\csnewttt$ (Algorithm~\ref{algo:newc-ttt}) is an adaptation of change sensitive algorithm $\csnew$ (Algorithm~\ref{algo:newc1}) where we execute $\tomitaE$ (Algorithm~\ref{algo:tomitaE}) in place of change sensitive $\mce$}. Note that in $\csnewttt$, there is no further filtering of maximal cliques required before outputting, since the call to $\tomitaE$ does not return any cliques that contain an edge from $\mathcal{E}$. This is much more efficient that first enumerating duplicate cliques, followed by suppressing duplicates before emitting them. This makes $\csnewttt$ much more efficient in practice than $\csnew$.


We first consider the guarantee provided by Algorithm $\tomitaE$.
\begin{lemma}
\label{lem:tomitaE}
$\tomitaE(\mathcal{G},K,\cand,\fini, \mathcal{E})$ (Algorithm~\ref{algo:tomitaE}) returns all maximal cliques $c$ in $\mathcal{G}$ such that (1)~$c$ contains all vertices from $K$, (2)~remaining vertices in $c$ are chosen from $\cand$, (3)~$c$ contains no vertex from $\fini$ and (4)~$c$ does not contain any edges in $\mathcal{E}$.
\end{lemma}

\begin{proof}
We note that the algorithm matches the original $\tomita$ algorithm exactly, except for lines 7 to 10. By the correctness of the $\tomita$ algorithm, the algorithm will find all maximal cliques in $\mathcal{G}$, consisting of the clique $K$ and vertices in $\cand$, but excluding vertices in $\fini$. We now argue why this algorithm avoids enumerating cliques consisting of edges in $\mathcal{E}$.


Assume there exists a maximal clique $c$ in $\mathcal{G}$, which contains an edge in $\mathcal{E}$, which is output by the $\tomitaE$ algorithm, assume the offending edge is $e = (q,v)$. Suppose that vertex $v$ was added to our expanding clique first. Then, as $q$ is processed, line $7$ of the algorithm will return back true as $e \in K_q$ and $\mathcal{E}$, thus $q$ will not be added to the clique, and $c$ will not be reported as maximal, a contradiction. If there is a maximal clique that contains vertices from $K$, but does not contain an edge from $\mathcal{E}$, then the condition in line $7$ will never evaluate to true and the clique will be returned.
\qed\end{proof}

The correctness of Algorithm \st{\tt EnumNewCliquesTTT} {\color{blue}\tt EnumNewTTT} follows in a similar fashion to that of Algorithm~\st{\tt EnumNewCliques} {\color{blue}\tt EnumNew} proved in Theorem~\ref{thm:cs-newc}.

\subsubsection{Enumerating Subsumed Cliques}
In addition to $\cssub_1$ described above, we consider a few alternatives for a practical implementation for enumerating subsumed maximal cliques,
$\cssub_2$, and $\cssub_3$. \\

$\cssub_2$: Note that $\cssub_1$ needs to check if a clique is maximal before outputting it. We observed that in practice, checking a clique for maximality is a costly operation since it potentially needs to consider the neighborhood of every vertex of the clique. If we are willing to pay the memory cost of storing the current maximal cliques of a graph (which is maintained incrementally), then we can avoid the maximality check by looking if a potential subsumed clique is in the set of maximal cliques in $G$. While this approach avoids a maximality check,  it introduces a significant memory overhead since it is necessary to maintain the set maximal cliques in memory. We reduce the memory used by storing signatures of maximal cliques as opposed to the cliques themselves. The signature is computed by representing a graph in a canonical fashion as a string followed by computing a hash of this string. By storing only the signatures and not the graphs themselves, we are able to check if a clique is a current maximal clique, and at the same time, pay far lesser cost in memory when compared with storing the graph itself. An algorithm based on this is presented in $\cssub_2$ (Algorithm~\ref{algo:sub2}).\\

\begin{algorithm}[t]
\DontPrintSemicolon
\caption{$\cssub_2(G,H,C,\Lambda^{new})$}
\label{algo:sub2}
\KwIn{$G$ - Input Graph \\ \hspace{1cm} $H$ - Edge set being added to $G$ \\ \hspace{1cm} $C$ - Set of maximal cliques in $G$ \\ \hspace{1cm} $\Lambda^{new}$ - set of new maximal cliques in $G\cup H$}
\KwOut{All cliques in $\Lambda^{del} = \cliques(G) - \cliques(G+H)$}
$\Lambda^{del} \gets \emptyset$\;
\For{$c \in \Lambda^{new}$}{
	$c^{-H} \gets c\setminus H$\;
	$C_{check} \gets \tomita(c^{-H}, K=\emptyset, \cand=c^{-H}, \fini=\emptyset)$\;
	\For{$c' \in C_{check}$}{
		\If{$c' \in C$}{
			$\Lambda^{del} \gets \Lambda^{del} \cup c'$\;
		}
	}
}
\end{algorithm}

$\cssub_3$: Unlike enumerating subsumed cliques from considering the set of new maximal cliques, this algorithm takes a more direct approach, as follows. For a new edge $e = (u,v)\in H$, suppose a clique $c$ which was maximal in $G$ was subsumed by another clique $c'$ that contains edge $e=(u,v)$. Then $c$ must contain either $u$ or $v$ (but not both). 
Based on this observation, by considering each edge $e = (u,v)\in H$, we compute all maximal cliques in $G$ within the vertex set $V_e = \{u,v\}\cup\{\Gamma_{G'}(u)\cap\Gamma_{G'}(v)\}$ not containing any vertex from $\{\Gamma_{G}(u)\cup\Gamma_{G}(v)\}\setminus V_e$, and report among those that contains either $u$ or $v$ as subsumed by the maximal cliques containing edge $e$.\\



\begin{algorithm}[t]
\DontPrintSemicolon
\caption{$\cssub_3(G, H)$}
\label{algo:sub4}
\KwIn{$G$ - Input Graph \\  \hspace{1cm} $H$ - Set of $m$ edges being added to $G$}
\KwOut{All cliques in $\Lambda^{del} = \cliques(G) - \cliques(G+H)$}
$\Lambda^{del} \gets \emptyset$\;
\st{$G' \gets G\cup H$}\;
{\color{blue}$G' \gets G + H$}\;
Consider edges of $H$ in an arbitrary order $e_1, e_2,\ldots,e_m$\;
\For {$i \gets 1,2, \ldots,m$}
{
	$e \gets e_i = (u,v)$\;
	$V_e \gets \{u,v\}\cup\{\Gamma_{G'}(u)\cap\Gamma_{G'}(v)\}$\;
	$Y \gets \{\Gamma_{G}(u)\cup\Gamma_{G}(v)\}\setminus V_e$\;
	$C_{check} \gets \tomita(G, K=\emptyset, \cand=V_e, \fini=Y)$\;
	\For{$c' \in C_{check}$}{
		\If{$u\in c'$ or $v\in c'$}{
			$\Lambda^{del} \gets \Lambda^{del} \cup c'$\;
		}
	}
}
\end{algorithm}

}


\remove{
We use the following observation about $\tomita$.
\begin{lemma}
\label{lem:ttt-misc}
If $c$ is maximal clique in $G$ and for some $A(\neq \emptyset) \subseteq c$, $A \subseteq \fini$ and $c\setminus A \subseteq \cand$, $c\setminus A$ will never be reported as maximal clique by $\tomita$.
\end{lemma}

\begin{proof}
As contradiction assume that $c$ is maximal clique in $G$, but $c' = c\setminus A$ is reported as maximal clique. $\fini$ and $\cand$ will both be empty when $c'$ will be output as maximal. But then $\Gamma_{G}(u_1) \cap \Gamma_{G}(u_2)\cap...\cap \Gamma_{G}(u_d)  = A$ where $u_1, u_2, ..., u_d$ are vertices of $c'$. Then $\fini$ will contain $A$ when $c'$ will be formed. This is because $\fini$ already contains $A$ and $\fini$ is updated to $\fini\cap\Gamma_{G}(u_1) \cap \Gamma_{G}(u_2)\cap...\cap \Gamma_{G}(u_d)$ when $c'$ is formed as in $\tomita$. This is a contradiction. So, $c'$ can never be reported as maximal clique by $\tomita$.
\qed\end{proof}

\begin{lemma}
\label{lem::sub-prop}
If a maximal clique $c$ in $G$ is subsumed by new maximal clique $c'$ in $G'$ due to addition of a set of new edges $H$, then, $c'$ contains an edge $e = (u,v) \in H$ such that either $u\in c$ or $v\in c$
\end{lemma}

\begin{proof}
First note that, $c'$ must contain an edge $e = (u,v) \in H$, otherwise, it would not become new maximal clique in $G'$. Without loss of generality assume that $u\in c'$ but $u\notin c$. Then $u$ is connected to all the vertices of $c$ in $c'$ as $c\subset c'$. So, there is at least a vertex $w\in c$ so that $u$ is not connected to $w$ in $G$, otherwise $u$ would have been included in $c$ as well. However, $u$ and $w$ are connected in $c'$. This implies $(u,w) \in H$. Now considering $v = w$, we see that $v\in c$.
\qed\end{proof}
}

\remove{
\begin{lemma}
\label{lem:sub4-1}
If $c$ is subsumed, then $c$ will be output by $\cssub_3$
\end{lemma}

\begin{proof}
Let us assume that $c$ is subsumed by a new maximal clique $c'$. Then $c'$ contains at least an edge $e=(u,v) \in H$ such that either $u\in c$ or $v\in c$ according to Lemma~\ref{lem::sub-prop}. As $c$ is subsumed by $c'$, $c\subset c'\subset V_e$. Now from the construction of $\cand$ we see that all the vertices needed to generate $c$ are in $\cand$. At any point in $\tomita$, it is true that $\cand\cap \fini = \emptyset$ which has always been maintained according to our construction of $\cand$ and $\fini$. Note that, not all maximal cliques generated at line $7$ of $\cssub_3$ contain either $u$ or $v$ as there can be some vertices in $V_e$ which are connected to none of $u$ and $v$ in $G$.  As we maintain loop invariant and all the vertices of $c$ are in $\cand$, $c$ will be generated at line $10$ of $\cssub_3$. So, $c$ will be output of $\cssub_3$.
\qed\end{proof}
 
\begin{lemma}
\label{lem:sub4-2}
Every clique $c$ generated by $\cssub_3$ is subsumed clique in $G'$.
\end{lemma}

\begin{proof}
If we can prove that $c$ is maximal clique is $G$, and is not maximal in $G'$ then we can say, $c$ is subsumed in $G'$. First we will prove that $c$ is maximal in $G$, and next we will prove that $c$ is not maximal in $G'$. Note that, $c\subset V_e$ for some edge $e=(u,v)\in H$ and $c$ must contains either $u$ or $v$. Assume for contradiction that $\cssub_3$ outputs a clique $c$ but it is not maximal in $G$. Then there should be at least a vertex $w$ in $G$ so that $c\cup\{w\}$ is larger clique. But then $w$ was not in $\cand$ while constructing $\cand$, because otherwise it would be added to $c$ already. Then $w$ might be in $\fini$, by the construction of $\fini$ set. But by lemma~\ref{lem:ttt-misc}, $c$ can never be output of $\cssub_3$ if $w$ is in $\fini$ set. If $w$ is not in $\fini$,  $w$ is in the neighborhood of neither $u$ nor $v$. So, $c$ must be maximal in $G$.

Now, as $c \subset V_e$, $c$ could at least be extended in $G'$ to $c\cup\{u\}$ if $c$ contains $v$ in $G$ or $c\cup\{v\}$ if $c$ contains $u$ in $G$. Therefore, $c$ is not maximal in $G'$.
\qed\end{proof}

From Lemmas~\ref{lem:sub4-1} and~\ref{lem:sub4-2}, we arrive at the correctness of $\cssub_3$.

\begin{lemma}
\label{thm:sub2}
$\cssub_3$ enumerates $\Lambda^{del}(G,G+H)$.
\end{lemma}
Proof omitted due to space constraints.
}

\remove{
\begin{lemma}
Time complexity of $\cssub_3$ is $O(3^{\frac{\delta}{3}}|H|)$ where $\delta$ is the maximum degree of a vertex in $G'$
\end{lemma}

\begin{proof}
The size of $\cand\cup\fini$ can be at most $O(\delta)$. Time complexity of $\tomita$ is therefore $O(3^{\frac{\delta}{3}})$ for each edge $e=(u,v)\in H$. To check whether a maximal clique contain $u$ or $v$ takes constant time. So, the overall time complexity of $\cssub_3$ is $O(3^{\frac{\delta}{3}}|H|)$
\qed\end{proof}
}

\subsection{Decremental Case}
\label{sec:dec}
We next consider the case when a set of edges $H$ is deleted from $G$, as opposed to added to $G$. We start from graph $G$ and go to graph $G-H$, and we are interested in efficiently enumerating $\Lambda(G,G-H)$. The decremental case can be reduced to the incremental case through the following observation. 
\begin{observation}
\label{obs:reduction}
$\Lambda^{del}(G, G-H) = \Lambda^{new}(G-H, G)$
and
$\Lambda^{new}(G, G-H) = \Lambda^{del}(G-H,G)$
\end{observation}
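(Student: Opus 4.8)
The statement to prove is Observation~\ref{obs:reduction}, which asserts two set identities relating the change in maximal cliques under edge deletion to the change under edge addition. The plan is to prove this purely from the definitions of $\Lambda^{new}$ and $\Lambda^{del}$, exploiting the fact that these two quantities are defined in a symmetric way with the two graphs in swapped roles.

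First I would recall the definitions: for graphs $A$ and $B$ (with $B$ obtained from $A$ by some edge modification), $\Lambda^{new}(A,B) = \cliques(B) \setminus \cliques(A)$ and $\Lambda^{del}(A,B) = \cliques(A) \setminus \cliques(B)$. Now I would observe the key algebraic fact: $\Lambda^{del}(A,B) = \Lambda^{new}(B,A)$ for \emph{any} two graphs $A,B$, simply because $\cliques(A) \setminus \cliques(B)$ is literally the same set as $\cliques(A) \setminus \cliques(B)$ when we read the latter as $\Lambda^{new}(B,A) = \cliques(A)\setminus\cliques(B)$. This is just swapping the order of the arguments in the definition.

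To conclude, I would instantiate $A = G$ and $B = G - H$. Then $\Lambda^{del}(G, G-H) = \cliques(G) \setminus \cliques(G-H) = \Lambda^{new}(G-H, G)$, which is the first identity. For the second, $\Lambda^{new}(G, G-H) = \cliques(G-H) \setminus \cliques(G) = \Lambda^{del}(G-H, G)$. One small thing to verify is that the notation $\Lambda^{new}(G-H, G)$ makes sense, i.e. that $G$ is indeed $(G-H)$ with an edge set added back in, which holds because $G = (G-H) + H$. So the incremental-case algorithms, applied with input graph $G-H$ and added edge set $H$, compute exactly the sets we want for the decremental problem on $G$.

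Honestly, there is no real obstacle here — the ``proof'' is essentially unwinding definitions and noting the built-in symmetry. The only thing to be careful about is making the role-swap explicit and confirming that $G-H$ together with $H$ reconstitutes $G$, so that the incremental machinery is legitimately applicable. I would write this up in three or four sentences.
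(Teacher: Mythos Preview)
Your proposal is correct and essentially matches the paper's own proof: both simply unwind the definitions $\Lambda^{new}(A,B)=\cliques(B)\setminus\cliques(A)$ and $\Lambda^{del}(A,B)=\cliques(A)\setminus\cliques(B)$ to observe the built-in symmetry. The paper phrases it as an element chase (take $c\in\Lambda^{del}(G,G-H)$, etc.)\ rather than your set-level statement of the general identity $\Lambda^{del}(A,B)=\Lambda^{new}(B,A)$, but the content is identical.
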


\begin{proof}
Consider the first equation: $\Lambda^{del}(G, G-H) = \Lambda^{new}(G-H, G)$.  Let $c \in \Lambda^{del}(G, G-H)$. This means that $c \in \cliques(G)$ and $c \not\in \cliques(G-H)$. Equivalently, $c$ is not a maximal clique in $G-H$, but upon adding $H$ to $G-H$, $c$ becomes a maximal clique in $G$. Hence, it is equivalent to say that $c \in \Lambda^{new}(G-H,G)$. Hence, we have $\Lambda^{del}(G, G-H) = \Lambda^{new}(G-H, G)$. 
The other equation, $\Lambda^{new}(G, G-H) = \Lambda^{del}(G-H,G)$, can be proved similarly.
\qed\end{proof}

\begin{algorithm}[t]
\DontPrintSemicolon
\caption{$\dec(G, H)$}
\label{algo:dec}
\KwIn{$G$ - Input Graph, $H$ - Set of $\rho$ edges being deleted}
\KwOut{All cliques in $\Lambda^{new}(G,G-H)\cup\Lambda^{del}(G,G-H)$}
$\Lambda^{new}\gets \emptyset$, $\Lambda^{del} \gets \emptyset$, $G'' \gets G - H$\;
$\Lambda^{del}\gets\csnew(G'',H)$\;
$\Lambda^{new}\gets\cssub(G'',H,\cliques(G''),\Lambda^{del})$\;
\end{algorithm}

The decremental case is outlined in Algorithm~\ref{algo:dec}.

\paragraph{Fully Dynamic Case:} Consider the {\em fully dynamic case}, where there is a set of insertions (edge set $H$) as well as deletions (edge set $H'$) from a graph. This can be processed as follows. First, we ensure there is no overlap between $H$ and $H'$, i.e. $H \cap H' = \emptyset$. If this is not the case, we can simply remove overlapping elements since they have no effect on the final graph. Next, we enumerate the change following all the edge deletions, followed by enumerating the change upon edge insertions. Note however, that this may not lead to a change-sensitive algorithm. Intermediate cliques that are output may not be in the final set of new or subsumed cliques.



\section{Experimental Evaluation}
\label{sec:expts}

In this section, we present results from empirical evaluation of the performance of algorithms proposed in this paper. We address the following questions: (1)~What is the runtime and memory usage for maintaining the set of maximal cliques of a dynamic graph? (2)~How does the runtime compare with the magnitude of the change? (3)~How do our algorithms compare with prior work?

\subsection{Datasets}
\textbf{Real Dynamic Graphs: }We consider graphs from the Stanford large graph database~\cite{JA14} and KONECT- The Koblenz Network Collection~\footnote{\url{http://konect.uni-koblenz.de/}}: {\tt dblp-coauthor} is a co-authorship network where each vertex represents an author and there is an edge between two authors if they have a common publication. {\tt flickr-growth} is a social network of Flickr users where each vertex represents a user and there exists a directed edge if two users are friends. {\tt sx-stackoverflow-a2q} is a social network where each vertex represents a user on stackoverflow, and if user $a$ answers user $b$'s question then there is a directed edge from $a$ to $b$. {\tt wiki-talk} is a network of Wikipedia users where each vertex represents a user and if user $a$ edited user $b$'s talk page then there exists a directed edge from $a$ to $b$. {\tt wikipedia-growth} is a hyperlink network of the English Wikipedia where each vertex represents a wikipedia page and there is an edge from a page $wiki_1$ to a page $wiki_2$ if there is a hyperlink of $wiki_2$ from $wiki_1$. {\tt youtube-u-growth} is a social network of youtube users, where nodes are the users and there is an edge between two users if they are friends. In each graph, edges have time-stamps of creation. We convert all these graphs into simple undirected graphs. If there are multiple time-stamp edges between two vertices, we take the edge with the earliest time-stamp. A summary of the graphs used in this experiment is given in Table~\ref{graph_summary}. In our experiments, we start with the empty graph and at each iteration, we add a batch of new edges, and enumerate the change resulting after the addition.

\textbf{Synthetic Graphs: } We also considers a variant of the Erd\H{o}s-R\'enyi random graph model $G(n,N)$ graph for our experiments where $n$ is the number of vertices and $N$ is the number of edges. In these, we first generate graphs according to the standard Erd\H{o}s-R\'enyi random graph model~\cite{ER}, and we ``plant'' cliques of a certain size. We call these graphs {\tt ER-1M-20M} with $1$M vertices, $20$M edges and {\tt ER-2M-15M} with $2$M vertices and $15$M edges. We plant $10$ random cliques each of size $20$ on {\tt ER-1M-20M} and $10$ random cliques each of size $30$ on {\tt ER-2M-15M}, with the goal of finding the planted cliques through incremental computation. 

\begin{table*}[t!]
\centering
\begin{tabular}{l c c}
\toprule
\textbf{Dataset} & Nodes & Edges\\
\midrule
{\tt dblp-coauthor} & $1282468$ & $5179996$\\
{\tt flickr-growth} & $2302925$ & $22838276$\\
{\tt sx-stackoverflow-a2q} & $2433067$ & $15079969$\\
{\tt wiki-talk} & $1094018$ & $2722029$\\
{\tt wikipedia-growth} & $1870709$ & $36532531$\\
{\tt youtube-u-growth} & $3223585$ & $9375374$\\
{\tt ER-1M-20M} & $1000000$ & $20001900$\\
{\tt ER-2M-15M} & $2000000$ & $15004350$\\
\bottomrule
\end{tabular}
\caption{\textbf{Summary of Graphs Used.}}
\label{graph_summary}
\end{table*}

\remove{
\begin{table*}[t!]
\centering
\begin{tabular}{l c c c}
\toprule
\textbf{Dataset} & Nodes & Edges & Maximum Degree\\
\midrule
{\tt dblp-coauthor} & $1282468$ & $5179996$ & $1522$\\
{\tt flickr-growth} & $2302925$ & $22838276$ & $27937$\\
{\tt sx-stackoverflow-a2q} & $2433067$ & $15079969$ & $22237$ \\
{\tt wiki-talk} & $1094018$ & $2722029$ & $139329$\\
{\tt wikipedia-growth} & $1870709$ & $36532531$ & $226073$\\
{\tt youtube-u-growth} & $3223585$ & $9375374$ & $91751$\\
{\tt ER-1M-20M} & $1000000$ & $20001900$ & $81$\\
{\tt ER-2M-15M} & $2000000$ & $15004350$ & $59$\\
\bottomrule
\end{tabular}
\caption{\textbf{Summary of Graphs Used.}}
\label{graph_summary}
\end{table*}
}

\subsection{Experimental Setup and Implementation Details}
We implemented all the algorithms in Java on a 64-bit Intel(R) Xeon(R) CPU with $8$G DDR$3$ RAM with $6$G heap memory.

\textbf{Algorithms:}~We evaluate our algorithm $\sdiff$ for maintenance of maximal cliques. $\sdiff$ consists of $\csnewttt$ for enumerating new maximal cliques and $\cssub$ for enumerating subsumed maximal cliques. We also implemented $\csnew$ for enumerating new maximal cliques, but $\csnewttt$ performed better, hence we present results for $\csnewttt$.

\begin{figure*}[t!]
\centering
\begin{tabular}{cc}
	\includegraphics[width=.45\textwidth]{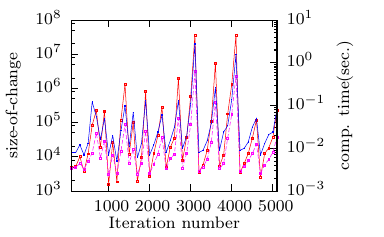} &
	\includegraphics[width=.45\textwidth]{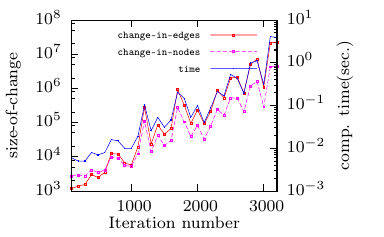} \\
	(a) {\tt dblp-coauthor} &
	(b) {\tt flickr-growth} \\
	\includegraphics[width=.45\textwidth]{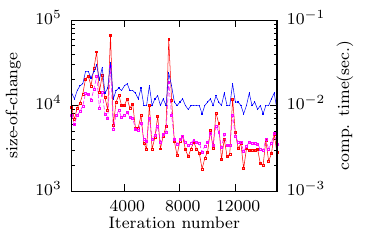} &
	\includegraphics[width=.45\textwidth]{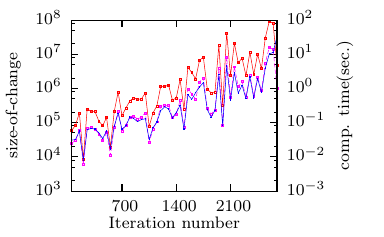} \\
	(c) {\tt sx-stackoverflow-a2q} &
	(d) {\tt wiki-talk} \\
	\includegraphics[width=.45\textwidth]{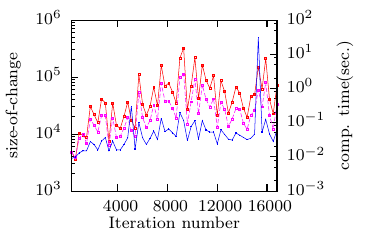} &
	\includegraphics[width=.45\textwidth]{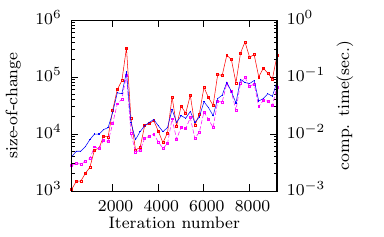} \\
	(e) {\tt wikipedia-growth} &
	(f) {\tt youtube-u-growth} \\
\end{tabular}
\caption{\textbf{Computation time for enumerating the change in set of maximal cliques for $\sdiff$, and size-of-change per batch (batch size $\rho = 1000$). The left $y$ axis shows the size of change and the right $y$ axis shows the computation time in seconds.}}
\label{fig:time_for_total_change}
\end{figure*}

\begin{table*}[t!]
\centering
\begin{tabular}{l c c c c }
\toprule
\textbf{Dataset} & $\stix$ & $\ov$ & $\mcmei$ & $\sdiff$ \\
\midrule
{\tt dblp-coauthor} ($464$) & $3811$ & $285$ & $7237$ & \textbf{0.1}\\
{\tt flickr-growth} ($251$) & $3664$ & $277$ & $7255$ & \textbf{0.04}\\
{\tt sx-stackoverflow-a2q} ($190$) & $4883$ & $232$ & $7316$ & \textbf{0.1}\\
{\tt wiki-talk} ($113$) & $7284$ & $62$ & $1425$ & \textbf{0.1}\\
{\tt wikipedia-growth} ($305$) & $3923$ & $283$ & $7190$ & \textbf{0.3}\\
{\tt youtube-u-growth} ($172$) & $3976$ & $279$ & $7257$ & \textbf{0.1}\\
\bottomrule
\end{tabular}
\caption{\textbf{Comparison of different algorithms showing cumulative time (in sec.). The number of batches for which the cumulative time is computed is in the parenthesis. Batch size $\rho$ is set to $100$.}}
\label{algorithm_comparison}
\end{table*}

\begin{figure*}[t!]
\centering
\begin{tabular}{cc}
	\includegraphics[width=.45\textwidth]{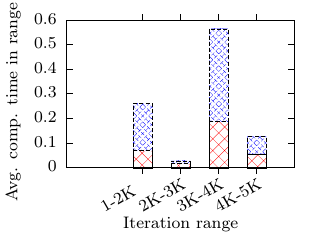} &
	\includegraphics[width=.45\textwidth]{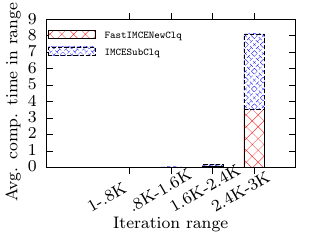} \\
	(a) {\tt dblp-coauthor} &
	(b) {\tt flickr-growth} \\
	\includegraphics[width=.45\textwidth]{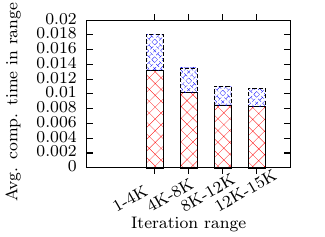} &
	\includegraphics[width=.45\textwidth]{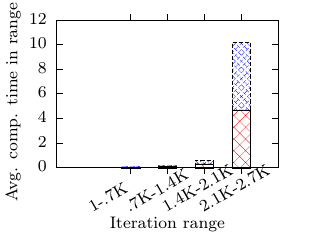} \\
	(c) {\tt sx-stackoverflow-a2q} &
	(d) {\tt wiki-talk} \\
	\includegraphics[width=.45\textwidth]{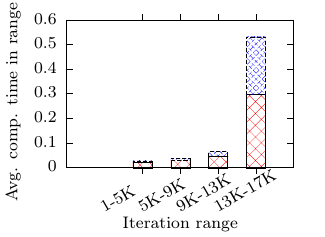} &
	\includegraphics[width=.45\textwidth]{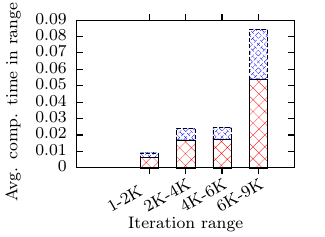} \\
	(e) {\tt wikipedia-growth} &
	(f) {\tt youtube-u-growth} \\
\end{tabular}
\caption{\textbf{Computation time (in sec.) broken down into time for new and subsumed cliques with batch size $\rho = 1000$. Average time in the $y$-axis is the average taken over the total computation times (new + subsumed) of the iterations in each of the ranges on the $x$-axis.}}
\label{fig:time_division}
\end{figure*}
\begin{table*}[t!]
\centering
\begin{tabular}{l c c}
\toprule
\textbf{Dataset} & $\csnew$ & $\csnewttt$\\
\midrule
{\tt dblp-coauthor}($9602$) & $6768$ & $19$\\
{\tt flickr-growth}($24860$) & $7318$ & $115$\\
{\tt sx-stackoverflow-a2q}($150800$) & $2852$ & $65$\\
{\tt wiki-talk}($12777$) & $7154$ & $75$\\
{\tt wikipedia-growth} ($26795$) & $562$ & $33$\\
{\tt youtube-u-growth} ($59814$) & $2981$ & $82$\\
\bottomrule
\end{tabular}
\caption{\textbf{Cumulative computation time (in sec.) for new maximal cliques with batch size $\rho = 100$. The number of batches for which the cumulative time is computed is in the parenthesis.}}
\label{csnew-vs-csnewttt}
\end{table*}

\remove{
\begin{table*}[t!]
\centering
\begin{tabular}{l c c c c c}
\toprule
\textbf{Dataset} & $\stix$ & $\ov$ & $\mcmei$ & $\naive$ & $\sdiff$ \\
\midrule
{\tt dblp-coauthor} ($464$) & $3811$ & $285$ & $7237$ & $1061$ & \textbf{0.1}\\
{\tt flickr-growth} ($251$) & $3664$ & $277$ & $7255$ & $1117$ & \textbf{0.04}\\
{\tt sx-stackoverflow-a2q} ($190$) & $4883$ & $232$ & $7316$ & $1007$ & \textbf{0.1}\\
{\tt wiki-talk} ($113$) & $7284$ & $62$ & $1425$ & $208$ & \textbf{0.1}\\
\bottomrule
\end{tabular}
\caption{\textbf{Comparison of different algorithms showing cumulative time (in sec.). The number of batches ($\rho = 100$) for which the cumulative time is computed is in the parenthesis.}}
\label{algorithm_comparison}
\end{table*}
}

\begin{table*}[t!]
\centering
\begin{tabular}{l c c}
\toprule
\textbf{Dataset} & $\sdiff$ & $\csnew$\\
\midrule
{\tt ER-1M-20M} & $19$ min. & $24$ min.\\
{\tt ER-2M-15M} & $15$ min. & $15$ min.\\
\bottomrule
\end{tabular}
\caption{\textbf{Total time taken to find all the planted cliques incrementally ($\rho = 100$). Other algorithms ($\stix$, $\ov$, $\mcmei$) cannot find a single planted clique even in an hour.}}
\label{results-synthetic}
\end{table*}

\begin{figure*}[t!]
  \def\width{0.45}
  \def\widthI{0.9}
  \centering
  \begin{subfigure}[b]{\width\textwidth}
    \centering
    \includegraphics[width=\widthI\textwidth]{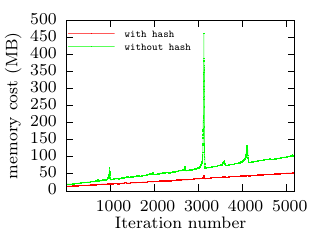}
    \caption{{\tt dblp-coauthor}}
    \label{fig:time1}
  \end{subfigure}
  \begin{subfigure}[b]{\width\textwidth}
    \centering
    \includegraphics[width=\widthI\textwidth]{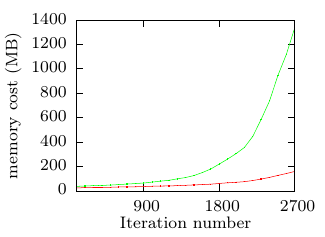}
    \caption{{\tt flickr-growth}}
    \label{fig:time2}
  \end{subfigure}
  \\
  \begin{subfigure}[b]{\width\textwidth}
    \centering
    \includegraphics[width=\widthI\textwidth]{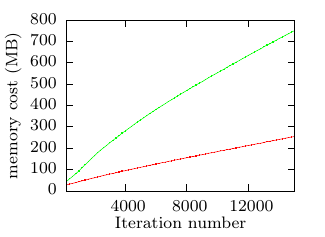}
    \caption{{\tt sx-stackoverflow-a2q}}
    \label{fig:time3}
  \end{subfigure}
  \begin{subfigure}[b]{\width\textwidth}
    \centering
    \includegraphics[width=\widthI\textwidth]{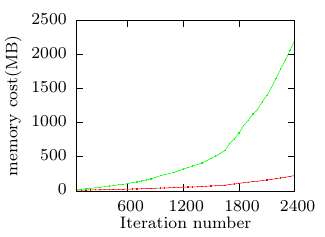}
    \caption{{\tt wiki-talk}}
    \label{fig:time4}
  \end{subfigure}
	\\
\begin{subfigure}[b]{\width\textwidth}
    \centering
    \includegraphics[width=\widthI\textwidth]{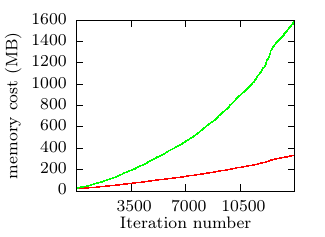}
    \caption{{\tt wikipedia-growth}}
    \label{fig:time3}
  \end{subfigure}
  \begin{subfigure}[b]{\width\textwidth}
    \centering
    \includegraphics[width=\widthI\textwidth]{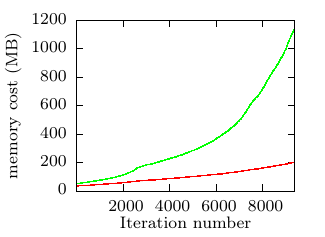}
    \caption{{\tt youtube-u-growth}}
    \label{fig:time4}
  \end{subfigure}
  \caption{Memory cost of $\sdiff$ with and without using hash function ($\rho = 1000$).}
  \label{fig:memory_consumption}
\end{figure*}


We consider the following prior algorithms for comparison with $\sdiff$: (1) $\stix$ (Stix~\cite{S04}) computes on a dynamic graph by incrementally adding one edge at a time; (2) $\ov$ (Ottosend and Vomlel~\cite{OV10}) computes on a dynamic graph by incrementally adding a set of edges; (3) $\mcmei$ (Sun et al.~\cite{SW+17}) computes on a dynamic graph by incrementally adding one edge at a time. For the algorithms ($\stix$, $\mcmei$) that support only single edge additions, we simulate the addition of a batch of edges by inserting the edges one at a time. 

\textbf{Metrics:}~We evaluate the performance of algorithms through the following metrics: \textbf{(1)}~total computation time for determining new maximal cliques and subsumed maximal cliques when a batch of new edges is added to the graph; \textbf{(2)}~change-sensitiveness, i.e, total computation time as a function of the size of the total change. We define the size of the total change in terms of edges denoted as change-in-edges as the cumulative sum of the number of edges in the new and subsumed maximal cliques. For example, if there are two new maximal cliques of sizes $3$ and $4$, and one subsumed clique of size $2$, the number of edges of a new maximal clique of size $3$ is ${3 \choose 2} = 3$, the number of edges of another new maximal clique of size $4$ is ${4 \choose 2} = 6$, and the number of edges of the subsumed clique of size $2$ is ${2 \choose 2} = 1$. Hence, the size of total change is $3 + 6 + 1 = 10$. We also consider the size of change in terms of nodes denoted as change-in-nodes where we compute the cumulative number of nodes of all cliques in the change set; \textbf{(3)}~memory cost, which includes the space required to store the graph as well as additional data structures used by the algorithm; and \textbf{(4)}~cumulative computation time (through a series of incremental updates) as a function of the size of the batch.

\subsection{Discussion of Experimental Results}
\textbf{Computation time:}~Figure~\ref{fig:time_for_total_change} shows the computation time of $\sdiff$ for computing the change in the set of maximal cliques when batches of edges are added. The batch size is set to $\rho=1000$. On the left y-axis is shown the size of the change, and on the right y-axis is the time for computing the change. We see that the runtime for computing the change in cliques becomes greater as iterations progress for graphs {\tt flickr-growth, wiki-talk, youtube-u-growth}, and remains roughly the same for other graphs. Fig.~\ref{fig:time_division} shows the breakdown of computation time of $\sdiff$ into computation time for new maximal cliques ($\csnewttt$) and computation time for subsumed maximal cliques ($\cssub$).

We also compare the computation time of $\sdiff$ with prior works as shown in Table~\ref{algorithm_comparison}. Clearly, $\sdiff$ is many orders of magnitude (more than $1000$) faster than prior algorithms. One reason why $\sdiff$ is so much faster than prior works is that $\sdiff$ systematically selects a local subgraph of the entire graph to search for new and subsumed maximal cliques. This reduces the computation effort considerably. $\ov$ tried to achieve such a local computation but $\ov$ is not provably change-sensitive for new maximal cliques, and its computation of subsumed cliques is expensive since the algorithm iterates over the entire set of maximal cliques for deriving subsumed cliques. A similar strategy of iterating over the entire set of maximal cliques for deriving maximal clique set of the updated graph as in $\mcmei$ makes the algorithm less efficient.
\begin{table*}[t!]
\centering
\begin{tabular}{l c c c c c}
\toprule
\textbf{Dataset} & $\rho = 1$ & $\rho = 10$ & $\rho = 100$ & $\rho = 1000$ & $\rho = 3\log_2 \Delta$\\
\midrule
{\tt dblp-coauthor} ($5179996$) & $1622$ & $1317$ & $1166$ & $1264$ & $1204$\\
{\tt flickr-growth} ($3298\times 10^3$) & $7151$ & $7125$ & $5551$ & $7177$ & $7000$\\
{\tt sx-stackoverflow-a2q} ($15079969$) & $270$ & $270$ & $166$ & $204$ & $106$\\
{\tt wiki-talk} ($2717\times 10^3$) & $6572$ & $6873$ & $5795$ & $6897$ & $5722$\\
{\tt wikipedia-growth} ($17000\times 10^3$) & $8869$ & $9093$ & $9134$ & $8495$ & $8678$\\
{\tt youtube-u-growth} ($9375374$) & $459$ & $462$ & $494$ & $400$ & $493$\\
\bottomrule
\end{tabular}
\caption{\textbf{Cumulative computation time (in sec.) of $\sdiff$ with different batch sizes. Note that $\Delta$ is the maximum degree of the graph before update. Numbers in the parenthesis indicates the total number of edges inserted incrementally.}}
\label{diff_batch_size}
\end{table*}

Next, we compare the computation times of $\csnew$ and $\csnewttt$ as shown in Table~\ref{csnew-vs-csnewttt}. We observe that $\csnewttt$ is much faster than $\csnew$. The increase in speed of $\csnewttt$ over $\csnew$ can be attributed to the additional pruning performed in $\csnewttt$ using $\tomitaE$, when compared with $\csnew$ which may enumerate the same clique multiple times (though suppressing it in the output).

On synthetic graphs, we observe that  $\sdiff$ can find all ``planted'' cliques in approximately $20$ min. where as the other algorithms ($\stix$, $\ov$, $\mcmei$) could not find a single planted clique in an hour. Results are shown in Table~\ref{results-synthetic}.

\remove{
shows the total computation time upon the addition of edges for $\ov$ and our proposed algorithm $\sdiff$. From the plots, we observe that, $\sdiff$ consistently performs better than $\ov$. We do not show the runtime of $\stix$ in these plots, because computation time of $\stix$ is considerably higher than both $\sdiff$ and $\ov$ that we can see in Table~\ref{compare_with_stix}. We observe that typically, $\sdiff$ is around $100$ times faster than $\ov$ and more than $10,000$ times faster than $\stix$.

We measure the computation time of $\csnew$ and compare with $\csnewttt$ and with $\ov$, and the results are shown in Figure~\ref{fig:new_time}. We observe that the computation time of $\csnewttt$ is consistently better than computation time of $\csnew$, which is in turn much faster than $\ov$. This is as expected, because in $\csnewttt$ (1)~we consider $\tomitaE$, a variant of practically most efficient algorithm $\tomita$ for enumerating maximal cliques and (2)~using $\tomitaE$, we do not generate a maximal clique more than once. We also observe that computation time of $\csnewttt$ is around $10$ times faster than computation time of $\csnew$ when we consider $\tomita$ is place of $\mce$ (as discussed for improving performance of $\csnew$ in Section~\ref{sec:newc-prac}).

In order to better consider the case of denser subgraphs and larger cliques, we compared the runtime of different methods when new edges were only chosen from the neighborhood of the $100$ vertices with the largest degrees. For this case of edges around high degree vertices, we observed that the relative performance of our algorithm, when compared with $\stix$ and $\ov$, is even better than the case of randomly chosen edges. For instance, for the {\tt wiki-talk}-$1$ graph, we observe that from the start of computation, for adding a single batch of size $1$K, $\stix$ takes more than $1$ hour, $\ov$ takes about $2$ sec., while for the same stream, our algorithm takes $30$ ms. More details are presented in Table~\ref{time:high_degree}.

We observe the performance of our proposed algorithm $\sdiff$ in comparison with $\naive$, the naive approach as explained earlier. Table~\ref{compare_with_naive} shows that our proposed algorithm is much faster than $\naive$ as expected. We do not show the performance of $\naive$ algorithm in the plots because $\naive$ is significantly slower than all other algorithms.
}


\textbf{Change-Sensitiveness: }
The change in the enumeration time as a function of the size of change can be seen in Figure~\ref{fig:time_for_total_change}. As the iterations progress, the size of change per batch of new edges increases for most graphs, but not in a smooth manner. In general, we observe that the time to compute the change tracks the size of change quite closely, except for graphs {\tt sx-stackoverflow-a2q, wikipedia-growth, youtube-u-growth}.

For these three graphs, we examined the breakdown of runtime more closely.  The time for handling a set of edges consists of three components: (1)~graph update time; (2)~subgraph computation time (isolating the subgraph for computing new maximal cliques); (3)~computation of new and subsumed cliques. Note that the time for the first two components do not depend on the size of change in the set of maximal cliques. In {\tt sx-stackoverflow-a2q}, the first two components take around $45\%$ of the time, and in each of {\tt wikipedia-growth, youtube-u-growth}, the first two components take around $30\%$ of the time. The significant proportion of time for the first two components means that the overall time for computing the change is not very closely correlated with the size of the change.  In the other input graphs, first two components take about $2\%$ of the overall computation time and the change-sensitive behavior can be clearly observed in these plots.

\textbf{Memory Consumption:}~Figure~\ref{fig:memory_consumption} shows the main memory used by $\sdiff$. For this experiment, we consider two different versions of the algorithm -- one with storing the clique set explicitly, and one version with only storing the hashes of the cliques. As expected, the use of a hash function reduces the memory consumption considerably. The difference in memory consumption between the two versions is especially visible in graphs {\tt flickr-growth, wiki-talk} and {\tt youtube-u-growth}, where the sizes of the maximal cliques are considerably larger. We used the $64$-bit {\tt murmur}\footnote{https://sites.google.com/site/murmurhash/} hash function on the canonical string representation of a clique, for computing the hash signature. Note that there are some ``spikes'' in the plot for {\tt dblp-coauthor}, where the memory consumption suddenly increased. On this graph, we observed that the number of maximal cliques at the point corresponding to the spike in memory usage also increased suddenly and then subsequently decreased. 

\textbf{Cumulative Computation time vs. batch size:}~ We also studied the effect of the batch size ($\rho$) on the cumulative computation time of $\sdiff$, while keeping the total number of edges added the same. For example a total of 10,000 edges would lead to 1000 batches if we used a batch size of 10, and 100 batches if we used a batch size of 100. Table~\ref{diff_batch_size} shows the results for different batch sizes. There is no observable trend found by varying the batch size.

\textbf{Summary of Results:} To summarize the results of our experiments, we note the following: (1)~$\sdiff$ is change-sensitive:  its runtime to enumerate the change in the set of maximal cliques is proportional to the magnitude of the change in the set of maximal cliques. (2)~$\sdiff$ is two to three orders of magnitude faster than prior algorithms (3)~the use of hash signatures for storing maximal cliques greatly reduces the memory consumption.

\remove{
We studied the computation time (of the entire stream) as a function of the batch size. It is interesting to see how the total time for computing all batches varies as a function of the batch size. From Figure~\ref{fig:batch_compare}, we see that the cumulative computation time for adding $2$M new edges incrementally improves as we increase the batch size. For example, choosing batch size of $10$ instead of $1$ improves in cumulative computation time by around $5$ times. We observe this difference because, increasing the size of batch reduces many intermediate computations of maximal cliques which no longer remain maximal as more batches are added in subsequent iterations.
}

\section{Conclusion}
We presented change-sensitive algorithms for maintaining the set of maximal cliques in a graph that is changing due to the addition or deletion of edges. We showed nearly tight bounds for the magnitude of change in the set of maximal cliques, due to a change in the set of edges. Our results show that even for the addition of a small number of edges, the change in the number of maximal cliques can be exponential in the size of the graph, in the worst case. Motivated by this, we designed {\em change-sensitive} algorithms, whose time complexity of enumerating the change is proportional to the magnitude of the change. Experimental results show that our algorithms are practical and improve on prior work by orders of magnitude. 

Many interesting research questions remain open, including: (1)~Design of more efficient change-sensitive algorithms for computing $\Lambda^{new}(G,G+H)$, especially for enumerating subsumed cliques. (2)~Computation of the the exact value of $\lambda(n)$, the maximum magnitude of change (3)~Design of change-sensitive algorithms for other dense structures in a graph such as quasi-cliques.



\bibliographystyle{abbrv}
\bibliography{maximal_cliques}


\end{document}